\def \calf {{\cal F}}
\def \dcut {{$d$-\textsc{Cut}}}
\def \kdcut {{$(k,d)$-\textsc{Cut}}}
\def \matchingcut {{\textsc{Matching Cut}}}
\begin{document}
\title{An FPT algorithm for Matching Cut and d-Cut}
%
%
\author{N R Aravind\inst{} \and
Roopam Saxena\inst{1}}
\authorrunning{N R Aravind and Roopam Saxena}
%
\institute{Department of Computer Science and Engineering\\IIT Hyderabad, Hyderabad. India\\
\email{\{aravind,cs18resch11004\}@iith.ac.in,}}
\maketitle              
\begin{abstract}

Given a positive integer $d$, the {\dcut} problem is to decide if an undirected graph $G=(V,E)$ has a cut $(A,B)$ such that every vertex in $A$ (resp. $B$) has at most $d$ neighbors in $B$ (resp. $A$). For $d=1$, the problem is referred to as {\matchingcut}. Gomes and Sau, in IPEC 2019, gave the first fixed parameter tractable algorithm for {\dcut} parameterized by maximum number of the crossing edges in the cut (i.e. the size of edge cut).
However, their paper doesn't provide an explicit bound on the running time, as it indirectly relies on a MSOL formulation and Courcelle's Theorem.
Motivated by this, we design and present an FPT algorithm for {\dcut} for general graphs with running time $2^{O(k\log k)}n^{O(1)}$ where $k$ is the maximum size of the edge cut. This is the first FPT algorithm for the {\dcut} and {\matchingcut} with an explicit dependence on this parameter. We also observe that there is no algorithm solving {\matchingcut} in time $2^{o(k)}n^{O(1)}$ where $k$ is the maximum size of the edge cut unless ETH fails.

\keywords{Matching Cut  \and  Fixed parameter tractable \and Algorithms.}
\end{abstract}
\section{Introduction}

Given a graph $G=(V,E)$, a cut $(A,B)$ is a matching cut if every vertex in $A$ (resp. $B$) has at most $1$ neighbor in $B$ (resp $A$); equivalently a cut $(A,B)$ is matching cut if edge cut $E(A,B)$ is a matching. Note that as per these definitions matching cut can be empty, and not every matching whose removal disconnects a graph needs to be a matching cut. {\matchingcut} is the problem of deciding if a given undirected graph $G$ admits a matching cut or not.
\paragraph{}

{\matchingcut} has been extensively studied. Graphs admitting matching cuts were first discussed by Graham \cite{graham1970primitive}. Chv{\'{a}}tal \cite{Chvatal84} proved {\matchingcut} to be NP-Complete for graphs with maximum degree 4. Bonsma \cite{DBLP:journals/jgt/Bonsma09} proved {\matchingcut} to be NP-complete for planar graphs with maximum degree 4 and with girth 5. Kratsch and Le \cite{DBLP:journals/tcs/KratschL16} provided an exact algorithm with running time $O^*(1.414..^n)\footnote{We use $O^*$ notation which suppresses polynomial factors. }$ and also provided a single exponential algorithm parameterised by the vertex cover number. Komusiewicz, Kratsch and Le \cite{DBLP:journals/dam/KomusiewiczKL20} further improved the running time of branching based exact algorithm to $O^*(1.3803^n)$ and also provided a SAT based $O^*(1.3071^n)$-time randomized algorithm.  They also provided single exponential algorithm parameterized by distance to cluster and distance to co cluster. Aravind, Kalyanasundaram and Kare \cite{DBLP:conf/cocoa/AravindKK17} provided fixed parameter tractable algorithms for various structural parameters including treewidth. Recently hardness and polynomial time solvable results have been obtained for various structural assumptions in \cite{DBLP:journals/tcs/LeL19,DBLP:conf/cocoon/HsiehLLP19}.

\paragraph{}
Recently Gomes and Sau \cite{DBLP:journals/algorithmica/GomesS21} considered
a generalization of matching cut and called it $d$-cut. For a positive integer $d\geq 1$, a cut $(A,B)$ is $d$-cut if every vertex in $A$ (resp. $B$) has at most $d$ neighbor in $B$ (resp $A$).  They called {\dcut} the problem of deciding if a given graph $G$ has a $d$-cut or not. They showed that for every $d\geq 1$, {\dcut} is NP-hard for regular graphs even when restricted to $(2d+2)$-regular graphs \cite{DBLP:journals/algorithmica/GomesS21}. They considered various structural parameters to study {\dcut} and provided FPT results. They also showed fixed parameter tractability of {\dcut} when parameterized by maximum size of the edge cut (i.e. number of edges crossing the cut) using a reduction on results provided by Marx, O'Sullivan and Razgon \cite{DBLP:conf/stacs/MarxOR10}. 
However, they didn't provide an explicit bound on the running time as the treewidth reduction technique of \cite{DBLP:conf/stacs/MarxOR10} relies on MSOL formulation and Courcelle's theorem \cite{DBLP:journals/iandc/Courcelle90} to show fixed parameter tractability. Marx et. al. \cite{DBLP:conf/stacs/MarxOR10} also observed that their method may actually increase the treewidth of the graph, however the treewidth will remain $f(k)$ for some function $f$. This motivated us to investigate an FPT algorithm for {\dcut} parameterized by maximum size of the edge cut where we can explicitly bound the dependence on parameter. Note that when $d=1$, the problem can be referred to as {\matchingcut}.
\paragraph{}

We now define a parameterized version of {\dcut} with maximum size of the edge cut as parameter. We also note that we will consider $d$ to be a part of the input in our problem definition.  

\begin{tcolorbox}[colback=white]
{
{\kdcut}: \newline
\textit{Input:} An instance $I$ = $(G,k,d)$. Where graph $G=(V,E)$, $|V| = n$ and $k,d\in \mathbb{N}$.\newline
\textit{Parameter}: k.\newline
\textit{Output:} \textit{yes} if $G$ contains a $d$-cut $(A,B)$ such that $|E(A,B)|\leq k$, \textit{no} otherwise. \newline
}
\end{tcolorbox}

\

\subsection{Our Contribution}

Our main contribution is the following theorem.

\begin{theorem}\label{FPTMC}
   There exists an algorithm which runs in time $2^{O(k\log k)}n^{O(1)}$ and solves {\kdcut}.
\end{theorem}
\paragraph{}

 Cygan, Komosa, Lokshtanov, Pilipczuk, Pilipczuk, Saurabh and Wahlstr{\"{o}m} \cite{DBLP:journals/talg/CyganKLPPSW21} provided a compact tree decomposition with bounded adhesion and guaranteed unbreakability along with a framework to design FPT algorithms and showed its application on Minimum Bisection and other problems. We use their framework and compact tree decomposition with bounded adhesion and guaranteed unbreakability along with {\kdcut} specific calculations and proofs to design an FPT algorithm for {\kdcut} for the proof of Theorem \ref{FPTMC}.  
\paragraph{}

 Kratsch and Le (section 3.3 in \cite{DBLP:journals/tcs/KratschL16}) have shown that the reduction given in \cite{DBLP:conf/wg/PatrignaniP01} also implies that for an $n$ vertex graph, {\matchingcut} cannot be solved in $2^{o(n)}n^{O(1)}$ unless exponential time hypothesis (ETH) fails. For every matching cut, the maximum size of the edge cut $k$ is linearly bounded by the number of vertices in the graph. Thus, this result of Kratsch and Le \cite{DBLP:journals/tcs/KratschL16} implies the following.
\begin{corollary}
    Unless ETH fails, the problem of deciding if a given $n$ vertex graph has a matching cut with edge cut size at most $k$ cannot be solved in $2^{o(k)}n^{O(1)}$.
\end{corollary}\label{MCLB}

\section{Preliminaries}
\subsection{Multiset Notations} Considering a set $U$ as universe, a  multiset is a $2$-tuple $P =(U,m_P)$ where  multiplicity function $m_P : U \rightarrow \mathbb{Z}_{\geq 0} $ is a mapping from $U$ to non negative integers such that for an element $e\in U$, the value $m_P(e)$ is the multiplicity of $e$ in $P$ i.e the number of occurrence of $e$ in $P$. Cardinality of a multiset $P$ is the sum of multiplicity of all the distinct elements of $P$. We write $e\in P$ if $m_P(e)\geq 1$. $P$ is \textit{empty} i.e. $P=\emptyset$ if and only if $\forall e\in U$, $m_P(e) =0$. For two multiset $A$ and $B$ on universe $U$, let $m_A$ and $m_B$ be their respective multiplicity functions. We will use following operations on multisets for our purposes.
\paragraph{}
\textbf{Equality:} $A$ is equal to $B$ denoted by $A=B$, if $\forall e\in U$, $m_A(e) = m_B(e)$. We say $A$ and $B$ are \textit{distinct} iff they are not equal.
\paragraph{}
\textbf{Inclusion:} $A$ is included in $B$ denoted by $A\subseteq B$, if $\forall e\in U$, $m_A(e)\leq m_B(e)$.
\paragraph{}
\textbf{Sum Union:}  $P$ is a sum union of $A$ and $B$ is denoted by $P = A \uplus B$, let $m_P$ be the multiplicity function for $P$, then $\forall e\in U$, $m_P(e) = m_A(e) + m_B(e) $.
\paragraph{}
Throughout this paper, if the context is clear, for any multiset $X$ we will use $m_X$ to denote the multipicity function of $X$.

\subsection{Notations and Terminologies}
All the graphs consider in this paper are simple and finite. We use standard graph notations and terminologies and refer the reader to \cite{DBLP:books/daglib/0030488} for basic graph notations and terminologies. We mention some notations used in this paper. A graph  $G=(V,E)$ has a vertex set $V$ and edge set $E$. We also use $V(G)$ and $E(G)$ to denote the vertex set and edge set of $G$ respectively. 
For an edge set $F\subseteq E(G)$, $V(F)$ denotes the set of all the vertices of $G$ with at least one edge in $F$ incident on it. For two disjoint vertex sets $A,B\subseteq V$, we use $E_G(A,B)$ to denote all the edges in graph $G$ with one endpoint in $A$ and other in $B$, if the graph in context is clear then we simply use $E(A,B)$. For a vertex set $S\subseteq V(G)$, $G[S]$ denotes the induced sub graph of $G$ on vertex set $S$, and $G-S$ denotes the graph $G[V(G)\setminus S]$. For an edge set $E'\subseteq E$, $G[E']$ denotes the sub graph of $G$ on edge set $E'$ i.e. $G[E']=(V(E'),E')$. 
For an edge set $A\subseteq E(G)$, $G-A$ denotes the graph with vertex set $V(G)$ and edge set $E(G)\setminus A$.
 A component $C$ of a graph $G$ is a maximally connected subgraph of $G$.
 A graph $G$ is a forest if every component of $G$ is a tree.
 We use $G'\subseteq G$ to denote that $G'$ is a subgraph of $G$. For a $v\in V(G)$, we use $N_G(v)$ to denote the open neighborhood of $v$ in $G$ that is the set of all adjacent vertices of $v$ in $G$; and for a $v\in V(G)$, we use $N_G[v]$ to denote the closed neighborhood of $v$ in $G$ that is $N_G[v]= N_G(v)\cup \{v\}$. If the context of the graph is clear we will simply use $N(v)$ and $N[v]$. Further, for a set $Z\subseteq V$, $N(Z)= \bigcup_{v\in Z} N(v)$, similarly  $N[Z]= \bigcup_{v\in Z} N[v]$. 
 \paragraph{}
    For a graph $G=(V,E)$, a \textit{partition of $G$} refers to a partition of its vertex set $V$. We call a partition of graph trivial if the partition of its vertex set consist of a single set. For a graph $G$, a partition $(A,B)$ of its vertices is called a \textit{cut}. A cut $(A,B)$ is a \textit{$d$-cut} if every vertex in $A$ (resp. in $B$) has at most $d$ neighbors in $B$ (resp. in $A$). For a graph $G=(V,E)$, an edge set $M\subseteq E$ is called a \textit{$d$-matching} if every vertex $v\in V$ has at most $d$ edges in $M$ incident on it. Observe that a cut $(A,B)$ is $d$-cut if and only if $E(A,B)$ is a $d$-matching.
\paragraph{}

\subsection{Parameterized Complexity} 
For details on parameterized complexity, we refer to \cite{DBLP:books/sp/CyganFKLMPPS15,DBLP:series/txcs/DowneyF13}, and recall some definitions here.

\begin{definition}[\cite{DBLP:books/sp/CyganFKLMPPS15}]
A \textit{parameterized problem} is a language $L \subseteq \Sigma^* \times \mathbb{N} $ where $\Sigma$ is a fixed and finite alphabet. For an instance $I=(x,k) \in \Sigma^* \times \mathbb{N} $, $k$ is called the parameter. 
\end{definition}

\begin{definition}[\cite{DBLP:books/sp/CyganFKLMPPS15}]
    A parameterized problem is called \textit{fixed-parameter tractable} if there exists a computable function $f:\mathbb{N} \to \mathbb{N}$, a constant $c$, and an algorithm $\cal A$ (called a \textit{fixed-parameter algorithm} ) such that: the algorithm $\cal A$ correctly decides whether $I\in L$ in time bounded by $f(k).|I|^c$. The complexity class containing all fixed-parameter tractable problems is called FPT.
\end{definition}

\subsection{Tree decomposition} 
\textbf{Tree Decomposition} \cite{DBLP:books/sp/CyganFKLMPPS15} : A tree decomposition of a graph $G$ is a pair $(T,\beta)$ where $T$ is a tree and $\beta$ (called a bag) is a mapping that assigns to every $t \in V (T)$ a set $\beta(t)\subseteq V (G)$, such that the following
holds: 
\begin{enumerate}
    \item For every $e \in E(G)$, there exists a $ t\in V(T)$ such that $V(e) \subseteq \beta (t);$
    \item For $v\in V(G)$, let $\beta ^{-1}(v)$ be the set of all vertices $t \in V(T)$ such that $v \in \beta (t)$, then  $T[\beta ^{-1}(v)]$ is a connected nonempty subgraph of $T$.
\end{enumerate}
If the tree $T$ is rooted at some node $r$, we call it a \textit{rooted tree decomposition}.

\begin{definition}
\textbf{Adhesion in tree decomposition:} In a tree decomposition $(T,\beta)$, For an edge $e\in E(T)$ where $e= \{t_1,t_2\}$, a set $\sigma(e)$ = $\beta(t_1) \cap \beta(t_2)$ is called adhesion of $e$. For a \textit{rooted tree decomposition} $(T,\beta)$ adhesion of a node $t\in V(T)$ denoted by $\sigma(t)$ is $\sigma(\{t,t'\})$ where $t'$ is parent node of $t$ in $T$. Adhesion of a root node $r$ is $\emptyset$.
\end{definition}

We recall the following from \cite{DBLP:journals/talg/CyganKLPPSW21}.
\paragraph{}
\textbf{Some functions for convenience}:
For a rooted tree decomposition $(T,\beta)$ at some node $r$, for $s,t\in V(t)$ we say that $s$ is a \textit{descendent} of $t$, if $t$ lies on the unique path from $s$ to the $r$. This implies that a node is a descendant of itself.\\
 $\gamma(t) = \bigcup _{c:\ \text{descendant of t}} \beta(c), \    \alpha(t) = \gamma(t)\setminus \sigma(t),\  G_t= G[\gamma(t)] - E(G[\sigma(t)])$.

\begin{definition}[\cite{DBLP:journals/talg/CyganKLPPSW21}]
\textbf{Compact tree decomposition:}
A rooted tree decomposition $(T,\beta)$ of $G$ is compact if for every non root-node $t\in V(T):$ $G[\alpha(t)]$ is connected and $N(\alpha(t)) = \sigma(t)$.
\end{definition}

\begin{definition}[\cite{DBLP:journals/talg/CyganKLPPSW21}]
\textbf{Separation:}
A pair $(A,B)$ of vertex subsets in a graph $G$ is a separation if $A\cup B= V(G)$ and there is no edge with one endpoint in $A\setminus B$ and the other in $B\setminus A$; the order of the separation $(A,B)$ is $|A\cap B|$.
\end{definition}

In \cite{DBLP:journals/talg/CyganKLPPSW21} the edge cut $(A,B)$ is defined as a pair $A,B \subseteq V(G)$ such that $A\cup B = V(G)$ and $A \cap B =\emptyset$, which we refer to as partition $(A,B)$. And the order of the edge cut $(A,B)$ is defined as $|E(A,B)|$. These terminologies are required for following definition.

\begin{definition} [\cite{DBLP:journals/talg/CyganKLPPSW21}]
\textbf {Unbreakability}: Let $G$ be a graph, A vertex subset $X\subseteq V(G)$ is (q,k)-unbreakable if every separation(A,B) of order at most k satisfies $|A\cap X|\leq q$\ or \ $|B\cap X|\leq q$. A vertex subset $Y\subseteq V(G)$ is (q,k)-edge-unbreakable if every edge cut (A,B) of order at most k satisfies $|A\cap Y|\leq q$ \ or\ $|B \cap Y|\leq q$.
\end{definition}
Observe that every set that is \textit{(q,k)-unbreakable} is also \textit{(q,k)-edge-unbreakable}.

\begin{theorem}[\cite{DBLP:journals/talg/CyganKLPPSW21}]\label{lean-decomp}
Given an $n$-$vertex$ graph $G$ and an integer $k$, one can in time $2^{O(k \log k)} n^{O(1)}$ compute a rooted compact tree decomposition $(T,\beta)$ of $G$ such that
\begin{enumerate}
    \item every adhesion of $(T,\beta)$ is of size at most $k$;
    \item every bag of $(T,\beta)$ is $(i,i)$-unbreakable for every $1\leq i \leq k$.
\end{enumerate}
\end{theorem}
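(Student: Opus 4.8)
The plan is to construct the decomposition top-down by a \emph{greedy refinement} process, in the spirit of the Bellenbaum--Diestel construction of lean tree decompositions, while making each refinement step run in FPT time by means of a balanced-separation subroutine. I would start from the trivial rooted decomposition consisting of a single bag $\beta(r) = V(G)$, which is vacuously compact and has empty adhesion, and maintain throughout the two invariants that the current decomposition is compact and that every adhesion has size at most $k$.

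The engine of the construction is a subroutine that, given a bag $\beta(t)$ and an integer $i \le k$, either certifies that $\beta(t)$ is $(i,i)$-unbreakable or returns a witnessing separation $(A,B)$ of order at most $i$ both of whose sides meet $\beta(t)$ in more than $i$ vertices. Finding such a witness amounts to searching, over pairs of terminal sets inside $\beta(t)$, for a small separator that leaves both sides heavy; since a direct search is exponential in $|\beta(t)|$, I would implement this subroutine with the randomized-contractions / color-coding machinery, which reduces the task to polynomially many minimum-cut computations between contracted terminal sets and runs in time $2^{O(k\log k)} n^{O(1)}$ per call.

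When the subroutine returns a witnessing separation $(A,B)$ for a bag $t$, I would refine the decomposition by re-routing it along the separator $S = A \cap B$, which has order at most $i \le k$: the subtree hanging at $t$ is rebuilt so that $S$ becomes a fresh adhesion and the two heavy sides $A$ and $B$ are placed into separate subtrees. Compactness is then restored locally by splitting each newly created part into the connected components of its $\alpha$-set and resetting each adhesion to the neighbourhood of the corresponding component, so that $G[\alpha(\cdot)]$ is connected and $N(\alpha(\cdot)) = \sigma(\cdot)$ hold again. The process halts exactly when no bag admits a witnessing separation, i.e.\ when every bag is $(i,i)$-unbreakable for all $1 \le i \le k$, at which point the required properties can be read off directly from the definitions.

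The hard part is twofold and lies in the global analysis rather than in any single step. First, I must guarantee \emph{termination in polynomially many refinements}: splitting a bag may create new bags on which unbreakability again fails, so I would attach to the decomposition a potential --- for instance the lexicographically ordered multiset of bag sizes, or a sum $\sum_t f(|\beta(t)|)$ for a suitably convex $f$ --- and show, via the Bellenbaum--Diestel-style accounting, that each refinement strictly decreases it, which bounds the number of steps by $n^{O(1)}$ and hence the total running time by $2^{O(k\log k)} n^{O(1)}$. Second, I must ensure that re-routing keeps every adhesion of size \emph{at most $k$} rather than merely $O(k)$; this is the delicate point, and it is handled by always choosing the separator $S$ supplied by Menger's theorem to be of minimum order and nested with respect to the existing adhesions, so that refining replaces an old adhesion by a strictly smaller one instead of concatenating the two.
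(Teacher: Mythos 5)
The paper does not actually prove this statement: it is imported verbatim (with an immediate \qed) from the cited reference of Cygan, Komosa, Lokshtanov, Pilipczuk, Pilipczuk, Saurabh and Wahlstr\"{o}m, and is used throughout as a black box. So there is no in-paper proof to compare against. That said, your plan does follow the same strategy as the actual proof in that reference: start from the trivial decomposition, repeatedly find a witnessing separation for a bag that is not $(i,i)$-unbreakable using an FPT (color-coding/randomized-contractions style) subroutine, refine \`{a} la Bellenbaum--Diestel, restore compactness by splitting into connected components of the $\alpha$-sets, and bound the number of refinements by a potential function over bag sizes. The skeleton is the right one.

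However, as a standalone proof your proposal is incomplete precisely at the two points you yourself flag as ``the hard part,'' and your proposed fixes there are assertions rather than arguments. Most seriously, the adhesion bound is not obtained by choosing the separator ``nested with respect to the existing adhesions'' so that each refinement ``replaces an old adhesion by a strictly smaller one''---nestedness with all existing adhesions is not something you can assume of the witnessing separation, and new adhesions are not in general subsets of old ones. The actual argument takes the witnessing separation of \emph{minimum order} among those splitting the bag into two heavy parts and then shows, via a Menger-type linkage argument, that every adhesion of the refined decomposition is contained in a set of size at most $\max$ of the old adhesion size and the separator order, which is what keeps adhesions at $k$ rather than growing to $O(k^2)$ over many rounds. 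Similarly, the strict decrease of the potential must be verified to survive the compactness-restoring splits (which create new bags and new adhesions of their own), and the interaction between refining for one value of $i$ and breaking unbreakability for another must be absorbed into a single global potential; neither is carried out. These are exactly the lemmas that occupy most of the cited paper, so the proposal should be read as a correct high-level roadmap to the known proof rather than a proof.
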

Note that since every bag of the output decomposition $(T,\beta)$ of Theorem \ref{lean-decomp} is $(k,k)$-unbreakable, it is also $(k,k)$-edge-unbreakable. Further, the construction provided for the proof of Theorem $\ref{lean-decomp}$ in  \cite{DBLP:journals/talg/CyganKLPPSW21} maintained that the number of edges in decomposition is always upper bounded by $|V(G)|$ and hence $|V(T)|\leq |V(G)|+1$.


\subsection{Color coding tools:}
\begin{lemma}[\cite{DBLP:conf/focs/ChitnisCHPP12}] \label{color}
Given a set $U$ of size $n$, and integers $0\leq a,b \leq n$, one can in $2^{O(min(a,b)\log (a+b))}n\log n$ time construct a family $\calf$ of at most\\ $2^{O(min(a,b)\log (a+b))}\log n$ subsets of $U$, such that following holds: for any sets $A,B\subseteq U$, $A\cap B =\emptyset$, $|A|\leq a$, $|B|\leq b$, there exist a set $S\in \calf$ with $A\subseteq S$ and $B\cap S =\emptyset$

\end{lemma}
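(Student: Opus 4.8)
The plan is to reduce the construction to two standard ingredients: a perfect hash family (splitter) that shrinks the universe $U$ to size polynomial in $a+b$ while keeping any small set injectively separated, followed by brute-force enumeration of the small subsets on the shrunk universe. The key difficulty, and the reason a naive approach fails, is that the bound must be exponential only in $\min(a,b)$ and not in $a+b$: applying an $(n,a+b)$-universal set directly would realize every trace of $A\cup B$ and hence certainly contain a good set $S$, but such a family has size $2^{\Theta(a+b)}\log n$, which is too large when $b\gg a$.

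First I would assume without loss of generality that $a=\min(a,b)$. If instead $b<a$, I observe that the requirement ``$A\subseteq S$ and $S\cap B=\emptyset$'' for the pair $(A,B)$ is equivalent, via complementation $S\mapsto U\setminus S$, to the requirement ``$B\subseteq S'$ and $S'\cap A=\emptyset$'' for the swapped pair $(B,A)$, whose first component $B$ now has size at most $\min(a,b)$. Thus a family built for the case $a\le b$ is turned into one for the case $b<a$ simply by taking the complements of all its members, at no asymptotic cost; this is how the $\min$ in the bound is achieved.

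Second, set $m=(a+b)^2$ and invoke the standard Naor--Schulman--Srinivasan splitter (perfect hash family): a family $\mathcal H$ of $(a+b)^{O(1)}\log n$ functions $h:U\to[m]$, computable in $(a+b)^{O(1)}n\log n$ time, such that every $W\subseteq U$ with $|W|\le a+b$ is injective under some $h\in\mathcal H$. On the shrunk universe $[m]$ I would enumerate the family $\mathcal G$ of all subsets of $[m]$ of size at most $a$; since $|\mathcal G|=\sum_{i\le a}\binom{m}{i}\le (m+1)^a=2^{O(a\log(a+b))}$, this stays within budget. I then define $\calf=\{\,h^{-1}(G):h\in\mathcal H,\ G\in\mathcal G\,\}$, whose cardinality is $|\mathcal H|\cdot|\mathcal G|=2^{O(a\log(a+b))}\log n$ and which is built in $2^{O(a\log(a+b))}n\log n$ time, each preimage costing $O(n)$.

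Finally I would verify correctness. Given disjoint $A,B$ with $|A|\le a$ and $|B|\le b$, apply the splitter to $W=A\cup B$ to get an $h\in\mathcal H$ injective on $W$, and set $A'=h(A)$ and $B'=h(B)$. Injectivity together with $A\cap B=\emptyset$ yields $A'\cap B'=\emptyset$ and $|A'|=|A|\le a$, so $G:=A'\in\mathcal G$ and $S:=h^{-1}(A')\in\calf$. For $x\in A$ we have $h(x)\in A'=G$, whence $A\subseteq S$; and if some $x\in B$ lay in $S$ then $h(x)\in A'\cap B'$, contradicting disjointness, so $S\cap B=\emptyset$. The step I expect to require the most care is arranging the dependence on $\min(a,b)$ rather than $a+b$: it is essential that after applying the splitter I enumerate subsets of size only $\le a$ (the image of the smaller side $A$) rather than $\le a+b$, and that the complementation argument correctly swaps the roles of $A$ and $B$ so that the exponent is always governed by the smaller of the two sets.
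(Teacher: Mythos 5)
Your proof is correct. Note that the paper itself gives no proof of Lemma \ref{color} --- it is imported verbatim from \cite{DBLP:conf/focs/ChitnisCHPP12} --- and your argument (an $(n,a+b,(a+b)^2)$-splitter, followed by enumeration of all subsets of size at most $\min(a,b)$ of the shrunk universe and taking preimages, with complementation of the whole family handling the case $b<a$) is essentially the same derandomization proof as in that cited source, so there is nothing to flag beyond the trivial degenerate case $a=0$ or $b=0$, where the family $\{\emptyset\}$ or $\{U\}$ suffices.
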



\section{FPT Algorithm}

In this section we give the proof of Theorem \ref{FPTMC}.
\paragraph{}
A disconnected graph $G$ trivially has a $d$-cut of size $0$ and thus, $(G,k,d)$ is always a yes instance of {\kdcut} for every $k,d\in \mathbb{N}$. Thus, it remains to find if $(G,k,d)$ is a yes instance when the graph $G$ is connected. From here onward we will always assume that input graph $G$ is simple and connected. Further, we can also assume that $d<k$, otherwise the problem become same as deciding if $G$ has a min cut of size at most $k$ which is polynomial time solvable \cite{stoerWagner}.

\paragraph{}
We will start by invoking Theorem \ref{lean-decomp} on input $n$ vertex graph $G$ with parameter $k$. This gives us a rooted compact tree decomposition $(T,\beta)$ of $G$ where every bag is $(k,k)$-edge-unbreakable and every node $t\in V(T)$ has adhesion of size at most $k$.

\subsection{Designing Memory Table}

\begin{definition}
\label{dcandidate_node}{\bf(Matched candidate set of a vertex set $Q\subseteq V$)}
For a vertex set $Q\subseteq V$ and $d\in \mathbb{N} $ ,\, we call a multiset $P=(V,m_P)$ a \textit{$d$-matched candidate set} of $Q$ if the following holds.
\begin{itemize}
    \item $\forall v\in Q$, $m_P(v)\leq d$,
    \item $\forall v\in V\setminus Q$, $m_P(v)=0$,
    \item $|P|\leq k$.
\end{itemize}
\end{definition}
Note that if $Q=\emptyset$, then an empty multiset $P=\emptyset$ that is $\forall v\in V$, $m_P(v)=0$ is the only $d$-matched candidate set of $Q$.  

\begin{proposition}\label{MatchedSet}
 Given $Q\subseteq V$, if $|Q|\leq k$, then there are at most $2^{O(k\log k)}$ distinct $d$-matched candidate sets of $Q$ and in time $2^{O(k\log k)}n^{O(1)}$ we can list all of them.
\end{proposition}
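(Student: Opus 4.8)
The plan is to treat a $d$-matched candidate set $P=(V,m_P)$ of $Q$ purely as its multiplicity function and to count and enumerate these functions directly. Since $m_P(v)=0$ for every $v\in V\setminus Q$, such a $P$ is completely determined by the restriction of $m_P$ to $Q$; hence it suffices to count and list the admissible functions $m:Q\to\{0,1,\dots,d\}$ that additionally satisfy $\sum_{v\in Q}m(v)\le k$, and then materialise the corresponding multiset on the full universe $V$.

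For the counting bound I would first drop the cardinality constraint $|P|\le k$ and simply observe that each of the $|Q|\le k$ vertices of $Q$ receives one of at most $d+1$ possible multiplicity values, so the number of candidate functions is at most $(d+1)^{|Q|}$. Invoking the standing assumption $d<k$ gives $(d+1)^{|Q|}\le k^{k}=2^{k\log k}$, which is exactly the claimed $2^{O(k\log k)}$ bound. If one prefers a bound that does not reference the assumption $d<k$, the constraint $|P|\le k$ already forces at most $k$ vertices to have positive multiplicity, so the admissible sets are precisely the multisets of size at most $k$ drawn from $Q$, and their number is $\sum_{j=0}^{k}\binom{|Q|+j-1}{j}\le (k+1)\binom{2k}{k}=2^{O(k)}$, which is even smaller.

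For the enumeration I would describe a straightforward recursive procedure that processes the vertices of $Q$ one at a time, branching over the multiplicity $0,1,\dots,d$ assigned to the current vertex and pruning any branch whose partial sum already exceeds $k$. Each root-to-leaf path produces one admissible function, and at every completed leaf the procedure outputs the corresponding multiset $P=(V,m_P)$, storing $m_P$ explicitly at a cost of $O(n)$ time and space. The number of leaves is at most the number of admissible functions, i.e.\ $2^{O(k\log k)}$; since every recursion node lies on some root-to-leaf path and the depth is $|Q|\le k$, the total number of recursion nodes is at most $(k+1)$ times the number of leaves, still $2^{O(k\log k)}$. Combined with the $n^{O(1)}$ overhead of writing down each produced multiset on $V$, the overall running time is $2^{O(k\log k)}n^{O(1)}$.

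I expect the only delicate point to be bookkeeping rather than any genuine difficulty: one must ensure that the dependence on $d$ is converted into a dependence on $k$ (either through $d<k$ or through the $|P|\le k$ constraint as above), and that the $\mathrm{poly}(n)$ cost of representing each multiset over the universe $V$ does not inflate the exponential factor. Both are routine, so no substantial obstacle is anticipated.
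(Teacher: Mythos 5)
Your proof is correct and follows essentially the same route as the paper's: both arguments reduce the proposition to an elementary count of multiplicity assignments on $Q$ (using the standing assumption $d<k$ to convert the dependence on $d$ into one on $k$) followed by exhaustive branching. The only difference is cosmetic: the paper enumerates subsets of size at most $k$ of a ground set of $kd$ vertex-copies and then removes duplicates by comparison, whereas your direct branching on the values $m(v)\in\{0,\dots,d\}$ with pruning produces each candidate set exactly once (and your multiset count even shows the true bound is $2^{O(k)}$, which is sharper than needed).
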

\begin{proof}
Consider $d$ copies of every $v\in Q$ and treat all of them as distinct elements, clearly if $|Q|\leq k$, then there are at most $kd$ such elements. 
Every $d$-matched candidate set $P$ of $Q$ can be considered as selection of $|P|$ elements from these $kd$ elements, as we can define multiplicity of each vertex $v\in Q$ to be equal to the number of selected copies of $v$ and multiplicity of each vertex $v\in V\setminus Q$ as $0$. 
Thus, number of all possible subsets of size at most $k$ of the set of these $kd$ elements is an upper bound for all possible $d$-matched candidate sets of $S$. Clearly, two selected subsets may produce equal $d$-matched candidate sets with this procedure. However, for the simplicity of the analysis we are using this procedure and the bound obtained is sufficient for our purpose. The number of possible subsets of size at most $k$ of set containing $kd$ elements are bounded by $\sum_{i=0}^k \binom{kd}{i}$ which is bounded by $2^{O(k\log k)}$ as $d$ is at most $k$. 
\qed \end{proof}
\paragraph{}
Following the framework given in \cite{DBLP:journals/talg/CyganKLPPSW21}, we perform a bottom up dynamic programming on $(T,\beta)$ by computing entries of a table $M$ and show that the computation of $M$ is sufficient to solve {\kdcut}. Formally, for every vertex $t\in V(T)$, every set $S \subseteq \sigma (t)$, $\bar{S} = \sigma(t)\setminus S$, every $d$-matched candidate set $P=(V,m_P)$ of $\sigma(t)$ and $n_e\in \{0,1\}$ we compute an integer $M[t, S, P, n_e] \in \{0,1,2,....,k,\infty\}$ which satisfies the following properties. 

\begin{enumerate}[(1)]
    \item 
If $M[t,S,P,n_e] \leq k$, then there exists a partition $(A,B)$ of $G_t$ such that the following holds.
    \begin{itemize}
        \item If $n_e=1$, then both $A$ and $B$ are non empty, otherwise either $A$ or $B$ is empty,
        \item $A \cap \sigma(t) \in \{S,\bar{S}\}$,
        \item $E_{G_t}(A,B)$ forms a $d$-matching,
        \item every vertex $v$ in $\sigma(t)$ has at most $m_P(v)$ neighbors in other side of the partition in $G_t$ i.e. $\forall v\in \sigma(t),\ |N_{G[E_{G_t}(A,B)]}(v)| \leq m_P(v)$,
        \item $|E_{G_t}(A,B)| \leq M[t, S,P,n_e]$.
        
    \end{itemize}
    
    \item For every partition $(A,B)$ of the graph $G$ that satisfies the following prerequisites:
     \begin{itemize}
        \item $A \cap \sigma(t) \in \{S,\bar{S}\}$,
        \item $E_G(A,B)$ forms a $d$-matching,
        \item every vertex $v$ in $\sigma(t)$ has at most $m_P(v)$ neighbors in other side of the partition in $G_t$ i.e. $\forall v\in \sigma(t),\ |N_{G[E_{G_t}(A,B)]}(v)| \leq m_P(v)$,
        \item $|E_G(A,B)|\leq k$.
    \end{itemize}
 It holds that $|E_{G_t}(A,B)| \geq M[t,S,P,1]$ if both $V(G_t)\cap A$ and $V(G_t)\cap B$ are non empty, otherwise $|E_{G_t}(A, B)| \geq M[t,S,P,0]$ if either $V(G_t)\cap A$ or $V(G_t)\cap B$ is empty.\\
We note that $|E_{G_t}(A,B)|\geq \infty$ imply that such partition $(A,B)$ doesn't exist. We now formally claim that computation of table $M$ is sufficient. 
\end{enumerate}

\begin{lemma}\label{table_correctness}
  $(G,k,d)$ is a yes instance of {\kdcut} if and only if $M[r,\emptyset,\emptyset,1]\leq k$ where $r$ is the root of $T$.
\end{lemma}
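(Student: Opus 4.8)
The plan is to derive both implications directly from the two defining properties (1) and (2) of the table, specialized to the root node $t=r$. The first step is to record what the relevant data become at the root. Since the adhesion of the root is $\sigma(r)=\emptyset$, the only choice for $S\subseteq\sigma(r)$ is $S=\emptyset$ (so $\bar S=\emptyset$ too), and by the remark following Definition \ref{dcandidate_node} the only $d$-matched candidate set of $\emptyset$ is the empty multiset $P=\emptyset$; thus the indices in $M[r,\emptyset,\emptyset,1]$ are forced. Moreover every node is a descendant of $r$, so $\gamma(r)=V(G)$ and $G_r=G[\gamma(r)]-E(G[\sigma(r)])=G$, which means $E_{G_r}(A,B)=E_G(A,B)$ for every partition $(A,B)$ of $G$. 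Finally, because $\sigma(r)=\emptyset$, the two clauses that mention $\sigma(r)$ in each property, namely $A\cap\sigma(r)\in\{S,\bar S\}$ and $\forall v\in\sigma(r),\ |N_{G[E_{G_r}(A,B)]}(v)|\le m_P(v)$, hold vacuously, so only the $d$-matching requirement and the flag $n_e$ carry information.

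For the implication $M[r,\emptyset,\emptyset,1]\le k\Rightarrow$ yes-instance, I would apply property (1) at $t=r$ with $n_e=1$. It produces a partition $(A,B)$ of $G_r=G$ in which both $A$ and $B$ are non-empty (this is precisely what $n_e=1$ guarantees), in which $E_G(A,B)$ is a $d$-matching, and with $|E_G(A,B)|\le M[r,\emptyset,\emptyset,1]\le k$. By the observation in the preliminaries that a cut is a $d$-cut exactly when its edge cut is a $d$-matching, $(A,B)$ is then a $d$-cut of $G$ of size at most $k$, so $(G,k,d)$ is a yes-instance.

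For the converse, assume $(G,k,d)$ is a yes-instance, witnessed by a $d$-cut $(A,B)$ with $|E_G(A,B)|\le k$. Being a cut, both $A$ and $B$ are non-empty, and being a $d$-cut, $E_G(A,B)$ is a $d$-matching. I would verify that $(A,B)$ satisfies the prerequisites of property (2) at $t=r$: the $\sigma(r)$-clauses hold vacuously and $|E_G(A,B)|\le k$. Since $G_r=G$ we have $V(G_r)\cap A=A$ and $V(G_r)\cap B=B$, both non-empty, so property (2) applies in its first case and yields $|E_G(A,B)|=|E_{G_r}(A,B)|\ge M[r,\emptyset,\emptyset,1]$; combined with $|E_G(A,B)|\le k$ this gives $M[r,\emptyset,\emptyset,1]\le k$.

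I do not expect a genuine obstacle in this lemma: once the table has been computed, the statement is essentially a bookkeeping check that the global problem is recovered at the root, where the adhesion is empty and $G_r$ is the whole graph. The only points demanding care are confirming that every $\sigma(r)$-dependent constraint degenerates correctly and that the flag $n_e=1$ is exactly the encoding of a non-trivial cut (both sides non-empty). The real work sits elsewhere, in proving that the computed entries $M[\cdot]$ actually satisfy properties (1) and (2), which is the content of the recurrence and its correctness argument rather than of this lemma.
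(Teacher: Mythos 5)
Your proposal is correct and follows essentially the same route as the paper's proof: both directions are obtained by specializing properties (1) and (2) to the root, where $\sigma(r)=\emptyset$ forces $S=P=\emptyset$, $G_r=G$, and the adhesion-dependent clauses hold vacuously. Your write-up merely spells out these degenerations more explicitly than the paper does.
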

\begin{proof}
For the first direction, by property 1 for $M[r,\emptyset,\emptyset,1]$, a non trivial partition $(A,B)$ for $G_r=G$ exists which is a $d$-cut of $G$ with $|E_G(A,B)|\leq k$.

For the other direction, let $(A,B)$ be a $d$-cut of $G$ such that $|E_G(A,B)| \leq k$. Since $\sigma (r)= \emptyset$, $(A,B)$ satisfies all the prerequisites of property $(2)$ for $t=r$, $S=\emptyset$ and $P=\emptyset$. Further, $V(G_r)\cap A$ and $V(G_r)\cap B$ are both non empty as $(A,B)$ is a non trivial partition. Thus,  $k \geq |E_{G_r}(A,B)|\geq M[r,\emptyset,\emptyset,1]$.  This finishes the proof.  
\qed \end{proof}

\begin{proposition}\label{empty}
 For every $t$, $S$ and $P$, if either $S$ or $\bar{S}$ is empty, then\\ $M[t,S,P,0] = 0$ satisfies property $(1)$ and $(2)$. 
\end{proposition}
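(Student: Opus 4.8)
The plan is to verify properties (1) and (2) directly: property (1) by exhibiting a single trivial (empty-cut) partition, and property (2) by the non-negativity of cardinalities. The conceptual heart of the statement --- and the reason the emptiness hypothesis appears --- is that ``$S$ or $\bar{S}$ is empty'' is exactly the condition under which a trivial partition can respect the constraint $A\cap\sigma(t)\in\{S,\bar{S}\}$.

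For property (1), note that if one of $S,\bar{S}$ is empty then the other equals the whole adhesion $\sigma(t)$, so I would split into two symmetric cases. If $\bar{S}=\emptyset$ (hence $S=\sigma(t)$), take $A=V(G_t)$ and $B=\emptyset$; if $S=\emptyset$, take $A=\emptyset$ and $B=V(G_t)$. Either choice leaves one side empty, witnessing $n_e=0$; the cut $E_{G_t}(A,B)$ is empty, so it is vacuously a $d$-matching of size $0=M[t,S,P,0]$, and every $v\in\sigma(t)$ has $0\le m_P(v)$ neighbors across it. The remaining condition $A\cap\sigma(t)\in\{S,\bar{S}\}$ holds because $A\cap\sigma(t)=\sigma(t)=S$ in the first case and $A\cap\sigma(t)=\emptyset=S$ in the second. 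Thus all five conditions of property (1) are met.

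For property (2), the only relevant clause asserts $|E_{G_t}(A,B)|\ge M[t,S,P,0]$ for global partitions in which $V(G_t)\cap A$ or $V(G_t)\cap B$ is empty; since $M[t,S,P,0]=0$, this holds trivially, and vacuously if no such partition exists. I expect no genuine obstacle in the calculations, as all are routine; the single point worth spelling out is \emph{why} the emptiness hypothesis is needed. For a proper nonempty $S$ the trivial partition would fail property (1), since $A\cap\sigma(t)$ can only equal $\emptyset$ or $\sigma(t)$ and hence match neither $S$ nor $\bar{S}$, so an empty cut could not serve as a witness and $M[t,S,P,0]$ would require a larger value. It is precisely the vanishing of $S$ or $\bar{S}$ that removes this obstruction and makes $0$ the correct entry.
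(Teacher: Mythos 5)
Your proof is correct and follows essentially the same route as the paper's: exhibit a trivial partition of $G_t$ with empty cut to witness property $(1)$, and observe that property $(2)$ holds by non-negativity of $|E_{G_t}(A,B)|$. The only cosmetic difference is that the paper uses the single partition $(A=\emptyset, B=V(G_t))$ in both cases (which suffices since $A\cap\sigma(t)=\emptyset\in\{S,\bar{S}\}$ under the hypothesis), whereas you split into two symmetric cases; your closing remark on why the emptiness hypothesis is essential is a nice addition but not needed for the verification.
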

\begin{proof}
Consider the partition $(A=\phi, B=V(G_t))$ of $G_t$, it satisfies all the statements of property $(1)$ for $M[t,S,P,0] = 0$ for $S\in \{\emptyset,\sigma(t)\}$ and for every $d$-matched candidate set $P$ of $\sigma(t)$. Further, property $(2)$ is trivially satisfied, as for any partition $(A,B)$ of $G$, $|E_{G_t}(A,B)| \geq 0$.
\qed \end{proof}

\begin{proposition}\label{infty}
 For every $t$, $S$ and $P$, if both $S$ and $\bar{S}$ are non empty, then $M[t,S,P,0] = \infty$.
\end{proposition}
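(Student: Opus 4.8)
The plan is to show that when both $S$ and $\bar S$ are non-empty there is \emph{no} partition of $G_t$ with one empty side that respects the constraint $A\cap\sigma(t)\in\{S,\bar S\}$; property~(1) then forbids any finite value for $M[t,S,P,0]$, while property~(2) imposes no constraint at all, so the only admissible value is $\infty$.

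First I would recall that $\sigma(t)\subseteq\beta(t)\subseteq\gamma(t)=V(G_t)$, so the whole adhesion lies inside $G_t$. Now consider any partition $(A,B)$ of $G_t$ in which one side is empty, as required for the $n_e=0$ case. If $B=\emptyset$ then $A=V(G_t)\supseteq\sigma(t)$, whence $A\cap\sigma(t)=\sigma(t)$; the constraint $A\cap\sigma(t)\in\{S,\bar S\}$ then forces $\sigma(t)=S$ or $\sigma(t)=\bar S$, which (since $S\cup\bar S=\sigma(t)$ with $S\cap\bar S=\emptyset$) can happen only if $\bar S=\emptyset$ or $S=\emptyset$. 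Symmetrically, if $A=\emptyset$ then $A\cap\sigma(t)=\emptyset$, and the constraint forces $S=\emptyset$ or $\bar S=\emptyset$. In either case the constraint fails whenever both $S$ and $\bar S$ are non-empty, so no one-sided partition of $G_t$ meeting the constraint exists.

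Next I would combine this with the two defining properties of the table. Since no partition of $G_t$ with an empty side satisfies $A\cap\sigma(t)\in\{S,\bar S\}$, property~(1) cannot hold for any value $\le k$, because such a value would assert the existence of precisely such a partition; hence the entry must take its only remaining admissible value, $\infty$. For property~(2) I would observe that any global partition $(A,B)$ of $G$ meeting the prerequisites with $V(G_t)\cap A$ or $V(G_t)\cap B$ empty would, by the same computation of $A\cap\sigma(t)$ as above, again force $S$ or $\bar S$ to be empty; so under the hypothesis that both are non-empty the universally quantified statement in property~(2) is vacuous and is trivially met by $M[t,S,P,0]=\infty$.

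The argument is pure bookkeeping about where $\sigma(t)$ lands in a one-sided partition, so I do not expect a genuine obstacle. The only point needing care is reading the constraint $A\cap\sigma(t)\in\{S,\bar S\}$ together with $\bar S=\sigma(t)\setminus S$ correctly, to confirm that ``$\sigma(t)$ entirely on one side'' is incompatible with a proper split of the adhesion into two non-empty parts $S$ and $\bar S$.
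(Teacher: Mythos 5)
Your argument is correct and matches the paper's (very terse) proof, which simply observes that no trivial partition of $G_t$ can satisfy $A\cap\sigma(t)\in\{S,\bar S\}$ when both $S$ and $\bar S$ are non-empty. Your write-up just spells out the bookkeeping for both property (1) and property (2) that the paper leaves implicit.
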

\begin{proof}
 It is straightforward to see that if both $S$ and $\bar{S}$ are non-empty, then there cannot be a trivial partition $(A,B)$ of $G_t$.
\qed \end{proof}

\begin{proposition}\label{nonempty}
For every $t$, $S$ and $P$, if both $S$ and $\bar{S}$ are non empty, then $M[t,S,P,1]\geq 1$. 
\end{proposition}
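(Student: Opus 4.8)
The plan is to argue by contradiction, exploiting the compactness of the tree decomposition produced by Theorem~\ref{lean-decomp}. First I would observe that the hypothesis that both $S$ and $\bar S$ are non-empty forces $\sigma(t)=S\cup\bar S$ to be non-empty, and in particular $t$ cannot be the root (since $\sigma(r)=\emptyset$). Hence $t$ is a non-root node and the compactness guarantees apply: $G[\alpha(t)]$ is connected and $N(\alpha(t))=\sigma(t)$. Since $\sigma(t)\neq\emptyset$, the equality $N(\alpha(t))=\sigma(t)$ already forces $\alpha(t)\neq\emptyset$, because the open neighbourhood of the empty vertex set is empty.

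Next, suppose towards a contradiction that $M[t,S,P,1]=0$ (this is the only value in $\{0,1,\dots,k,\infty\}$ strictly below $1$). By property~$(1)$ applied with $n_e=1$, there is a non-trivial partition $(A,B)$ of $G_t$ with $A\cap\sigma(t)\in\{S,\bar S\}$ and $|E_{G_t}(A,B)|\leq 0$, i.e.\ $E_{G_t}(A,B)=\emptyset$: no edge of $G_t$ crosses the partition. Without loss of generality assume $A\cap\sigma(t)=S$, so that $B\cap\sigma(t)=\bar S$, both being non-empty.

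The core step is then to produce a crossing edge. Since $G[\alpha(t)]$ is connected and non-empty and $G_t$ retains every edge inside $\alpha(t)$ (such edges lie in $G[\gamma(t)]$ and are not inside $\sigma(t)$, so are not deleted), the fact that no edge of $G_t$ crosses $(A,B)$ means the whole set $\alpha(t)$ lies on a single side; say $\alpha(t)\subseteq A$ (the case $\alpha(t)\subseteq B$ is symmetric, using $S$ in place of $\bar S$). Now pick any $w\in\bar S=B\cap\sigma(t)$, which exists because $\bar S\neq\emptyset$. As $w\in\sigma(t)=N(\alpha(t))$, it has a neighbour $u\in\alpha(t)\subseteq A$. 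The edge $uw$ lies in $G[\gamma(t)]$, and since $u\in\alpha(t)=\gamma(t)\setminus\sigma(t)$ it is not an edge inside $\sigma(t)$; hence $uw\in E(G_t)$. But $u\in A$ and $w\in B$, so $uw\in E_{G_t}(A,B)$, contradicting $E_{G_t}(A,B)=\emptyset$. Therefore $M[t,S,P,1]\neq 0$, giving $M[t,S,P,1]\geq 1$.

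I expect the main obstacle to be the bookkeeping around the definition $G_t=G[\gamma(t)]-E(G[\sigma(t)])$: one must verify that the edge joining $\alpha(t)$ to the ``wrong'' part of $\sigma(t)$ survives both the deletion of $E(G[\sigma(t)])$ and the restriction to $\gamma(t)$, which is precisely where the fact $u\notin\sigma(t)$ is used. Everything else is a routine application of connectivity of $G[\alpha(t)]$ together with $N(\alpha(t))=\sigma(t)$.
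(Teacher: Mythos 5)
Your proposal is correct and follows essentially the same route as the paper's proof: both invoke the compactness of the decomposition ($G[\alpha(t)]$ connected and $N(\alpha(t))=\sigma(t)$ for the non-root node $t$) to conclude that any partition of $G_t$ separating $S$ from $\bar S$ must have a crossing edge. The paper states this conclusion in one line, whereas you carefully justify the two details it leaves implicit — that $\alpha(t)\neq\emptyset$ and that the edge from $\alpha(t)$ to the opposite part of $\sigma(t)$ survives the deletion of $E(G[\sigma(t)])$ — both of which are exactly right.
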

\begin{proof}
If $M[t,S,P,1]\leq k$, then consider the partition $(A,B)$ of $G_t$ corresponding to property $(1)$, as both $S$ and $\bar{S}$ being non empty, both $A$ and $B$ are non empty. Recalling the compactness of tree decomposition $T$, where $G[\alpha(t)]$ is connected and $N_G(\alpha(t)) = \sigma(t)$. Thus, for any such partition, $E_{G_t}(A,B)$ can't be empty. 
\qed \end{proof}
\paragraph{}

There are $O(n)$ nodes in $T$, hence if for a fixed node $t\in V(T)$ we can compute the entries of the table $M$ in time $2^{O(k\log k)}n^{O(1)}$, then we can prove Theorem $\ref{FPTMC}$.
For every $t\in V(T)$, $|\sigma(t)|\leq k$ and thus there are at most $2^{O(k\log k)}$ choices of $S$. Recalling proposition $\ref{MatchedSet}$ the number of distinct $d$-matched candidate sets of $\sigma(t)$ is bounded by $2^{O(k\log k)}$ and we can obtain them in time $2^{O(k\log k)}n^{O(1)}$. Thus, the number of entries in $M$ for a fixed $t\in V(T)$ are bounded by $2^{O(k\log k)}$. Thus, if we can show that a single entry $M[t,S,P,n_e]$ can be computed in time $2^{O(k\log k)}n^{O(1)}$ given that the entries of $M$ for every children $c$ of $t$ in $T$ is available, we can bound the time required to compute all the entries of $M$ for a node $t\in V(T)$ to $2^{O(k\log k)}\cdot2^{O(k\log k)}n^{O(1)}$ which is essentially $2^{O(k\log k)}n^{O(1)}$. Thus, from now onwards we focus on the computation of a single entry $M[t,S,P,n_e]$.


\subsection{Computing an Entry of Memory Table}
\label{section_M[tsp]_calculation}
In this section we will discuss the  computation of a single entry $M[t,S,P,n_e]$. For a given $t\in V(t)$, $S\subseteq \sigma(t)$ and $P$ such that $P$ is a $d$-matched candidate set of $\sigma(t)$. We use proposition \ref{empty} and \ref{infty} to set $M[t,S,P,0]\in\{0,\infty\}$. Thus, we move on to the  computation of $M[t,S,P,1]$. Let $Z(t)$ be the set of all the children of $t$ in $T$. From here on we will assume that entries of $M$ corresponding to every $c\in Z(t)$ are calculated. Note that if $t$ is a leaf vertex, then $Z(t)$ is empty.
\paragraph{}
Following the framework given in \cite{DBLP:journals/talg/CyganKLPPSW21}, in this step of the algorithm, we focus on partitioning $\beta(t)$. Intuitively, we want to find the best way to partition $G_t$, for this we will try to partition $\beta(t)$ and use entries $M[c,.,.,.]$ to know the best way to partition subgraph $G_c$. In this regard, we also consider every edge $e\in E(G_t[\beta(t)])$ as a subgraph of $G_t$ on the similar lines of $G_c$, and to find the best way to partition subgraph $G[e]$ we construct a table $M_E[e,S',P']$ for every edge $e=\{u,v\}\in E(G_t[\beta(t)])$ where $S' \subseteq \{u,v\}$, and $P'$ is a 1-matched candidate set of $\{u,v\}$. We are taking a 1-matched candidate set of $V(e)$, as there is only 1 edge in $G[e]$. 
\paragraph{}
\textbf{Construction 3.1:} We assign the following values to $M_E[e,S',P']$.
    \begin{itemize}
        \item $M_E[e,\emptyset, P']$ = $M_E[e,\{u,v\}, P']$ = $0$ for every $1$-matched candidate set $P'$ of $V(e)$;
        \item $M_E[e,\{u\},P']$ = $M_E[e,\{v\},P']$ = $1$ for $1$-matched candidate set $P'$ of $V(e)$ such that $m_{P'}(u) =m_{P'}(v) =1$;
        \item  $M_E[e,\{u \},P']$ = $M_E[e,\{v\},P']$= $\infty$ for every $1$-matched candidate set $P'$ of $V(e)$ such that $m_{P'}(u)=0$ or $m_{P'}(v) =0$.
    \end{itemize}
Intuitively, if both $u,v$ falls into the same side of the partition, 
then $M_E[e,S',P']$ costs $0$. Otherwise, if $u$ and $v$ falls into different side of the partition and both are allowed to have a neighbor in the other side of the partition in $G[e]$ as per $P'$, then $M_E[e,S',P']$ costs $1$ and if at least one of $u$ or $v$ is not allowed to have a neighbor in the other side of the partition in $G[e]$ as per $P'$, then $M_E[e,S',P']$ costs $\infty$. Clearly, every $1$-matched candidate set of $V(e)$ can be considered as a subset of $V(e)$. Thus, number of entries in table $M_E$ is bounded by $n^{O(1)}$ and we can calculate them as per above assignment in time $n^{O(1)}$.

\begin{definition}{\bf (S-compatible set of a bag)}
For $S\subseteq \sigma (t)$ and $\bar{S}= \sigma(t)\setminus S$, a set  $A_s \subseteq \beta (t)$ is called an S-Compatible set of $\beta (t)$ if
\begin{itemize}
    \item $|A_s| \leq k$,
    \item  $A_s \cap \sigma(t) \in \{S, \bar{S}\}$,
    \item $A_s$ is non-empty proper subset of $\beta(t)$ i.e. $A_s\not = \emptyset$ and $A_s\not = \beta(t)$.
\end{itemize}
\end{definition}
\paragraph{}
For an $S$-compatible set $A_s$ of $\beta(t)$, we define $Z(t,A_s) = \{c\ |\ c\in Z(t)\land((A_s\cap \sigma(c)) \neq \emptyset) \land (( \sigma(c)\setminus A_s)\neq \emptyset) \}$ and call it set of \textit{broken children} of $t$ with respect to $A_s$, and  also define $F(t,A_s) = \{e\ |\ e\in E(G_t[\beta(t)])\land ((A_s\cap V(e)) \neq \emptyset) \land (( V(e)\setminus A_s)\neq \emptyset)\}$ and call it set of \textit{broken edges} of $E(G_t[\beta(t)])$ with respect to $A_s$.
Clearly, given $A_s$, we can find $Z(t,A_s)$ and $F(t,A_s)$ in time $n^{O(1)}$.

\begin{definition}{\bf (P-Compatible family)}
For $t,P$ and $A_s$ such that $t\in V(T)$, $P$ is a $d$-matched candidate set of $\sigma(t)$ and $A_s$ is an $S$-compatible set of $\beta(t)$. A family $\calf_{P|A_s}=\{P_c|\ c \in Z(t,A_s)\} \cup \{P_e|\ e \in F(t,A_s)\}$ is called an $A_s$-restricted $P$-compatible family of $t$ if the following holds:

\begin{itemize}
    \item for each $c \in Z(t,A_s)$, is a $d$-matched candidate set of $\sigma(c)$,
     \item for each $e \in F(t,A_s)$, $P_e$ is a $1$-matched candidate set of $V(e)$,
    \item let $P_{z} = \biguplus_{P_v \in \calf_{P|A_s}} P_v$, then $|P_{z} | \leq 2k$,
    \item $\forall v\in V$, $m_{P_z}(v)\leq d$,
    \item $\forall v\in \sigma(t)$, $m_{P_z}(v)\leq m_P(v)$.
\end{itemize}
\end{definition}
\paragraph{}
The intuition behind the $A_s$-restricted $P$-compatible family is as follows.
$P$, $P_c$ and $P_e$ can be considered as a restriction on the possible number of neighbors of vertices of $\sigma(t)$, $\sigma(c)$ and $V(e)$ in other side of a partition respectively in $G_t$, $G_c$ and $G[e]$. $A_s$-restricted $P$-compatible family can be considered as the family of restrictions $P_c$ and $P_e$ which are consistent with $P$ and maintain the property of $d$-matching. 
\paragraph{}
We say that two $A_s$-restricted $P$-compatible families of $t$, $\calf_{P|A_s}=\{P_c|\ c \in Z(t,A_s)\} \cup \{P_e|\ e \in F(t,A_s)\}$  and $\calf'_{P|A_s}=\{P'_c|\ c \in Z(t,A_s)\} \cup \{P'_e|\ e \in F(t,A_s)\}$ are equal if and only if $\forall c \in Z(t,A_s)$, $P_c=P'_c$ and $\forall e \in F(t,A_s)$, $P_e=P'_e$. We say that $\calf_{P|A_s}$ and $\calf'_{P|A_s}$ are distinct if and only if they are not equal.

\begin{proposition}\label{FPAS}
 For an $S$-compatible set $A_s$ of $\beta(t)$, if $|Z(t,A_s)|+|F(t,A_s)|\leq k$, then there are at most $2^{O(k\log k)}$ distinct $A_s$-restricted $P$-compatible families of $t$ and in time $2^{O(k\log k)}n^{O(1)}$ we can list all of them.
\end{proposition}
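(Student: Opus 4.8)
The plan is to encode each $A_s$-restricted $P$-compatible family as a single bounded-size multiset over a suitable ground set of ``tagged'' vertices, and then to count these multisets by a stars-and-bars estimate, exactly in the spirit of Proposition~\ref{MatchedSet} but over a larger universe.

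First I would fix notation. Let $I = Z^{A_s}_{ab} \cup E^{A_s}_{ab}$ be the index set of the family; by hypothesis $|I| \le k$. For each $v \in I$ let $U_v$ be the ground set of the corresponding member of the family: $U_v = \sigma(c)$ when $v$ is a broken child $c$, and $U_v = V(e)$ when $v$ is a broken edge $e$. Since every adhesion of $(T,\beta)$ has size at most $k$ by Theorem~\ref{lean-decomp}, and every edge has exactly two endpoints, we get $|U_v| \le k$ for every $v \in I$. A family $F_{P|A_s}$ is determined precisely by the multiplicities $m_{P_v}(w)$ for $v \in I$ and $w \in U_v$, and two families are distinct exactly when these multiplicities differ for some such pair. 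Hence a family is nothing but a nonnegative integer assignment to the set of tagged pairs $\Pi = \{(v,w) : v \in I,\ w \in U_v\}$, and $|\Pi| = \sum_{v \in I} |U_v| \le k\cdot k = k^2$.

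Next I would use the cardinality constraint to bound the total mass. By definition $|P_z| = \sum_{v \in I} |P_v| = \sum_{(v,w)\in\Pi} m_{P_v}(w) \le 2k$. Therefore the number of distinct families is at most the number of nonnegative integer assignments to the $|\Pi|\le k^2$ coordinates whose total is at most $2k$. Introducing a slack coordinate, this count equals $\binom{|\Pi| + 2k}{2k} \le \binom{k^2 + 2k}{2k}$, and applying $\binom{M}{j} \le (eM/j)^j$ with $M = k^2 + 2k$ and $j = 2k$ yields $\binom{k^2+2k}{2k} \le (O(k))^{2k} = 2^{O(k\log k)}$. I expect this binomial estimate to be the one step needing real care: it shows that even though the ground set $\Pi$ has size $\Theta(k^2)$ rather than $O(k)$, the mass cap of $2k$ still forces only $2^{O(k\log k)}$ multisets, whereas a naive per-coordinate count such as $(d+1)^{|\Pi|}$ would be far too large.

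Finally, for the listing I would enumerate every multiset over $\Pi$ of cardinality at most $2k$; there are $2^{O(k\log k)}$ of these and a systematic enumeration produces them within that bound up to polynomial factors. For each such assignment I assemble the candidate family $\{P_v\}_{v\in I}$ and verify in $n^{O(1)}$ time the remaining defining conditions, namely that each $P_c$ is a $d$-matched candidate set of $\sigma(c)$ and each $P_e$ a $1$-matched candidate set of $V(e)$, that $m_{P_z}(v)\le d$ for all $v\in V$, and that $m_{P_z}(v)\le m_P(v)$ for all $v\in\sigma(t)$. Discarding the assignments that fail any check leaves exactly the $A_s$-restricted $P$-compatible families, and the total running time is $2^{O(k\log k)}n^{O(1)}$, as claimed.
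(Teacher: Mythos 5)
Your proposal is correct and follows essentially the same route as the paper: both encode a family by its multiplicities over a ground set of size $O(k^2)$ (the paper additionally expands each vertex into $d\le k$ copies so that families become subsets of size at most $2k$ of a universe of size at most $k^3$, whereas you count multisets of total mass at most $2k$ directly via a stars-and-bars binomial), and both conclude with the same $\binom{\mathrm{poly}(k)}{O(k)}=2^{O(k\log k)}$ estimate followed by exhaustive enumeration and an $n^{O(1)}$ validity check per candidate. The only difference is cosmetic bookkeeping, not a different idea.
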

\begin{proof}
Consider the following procedure to generate $A_s$-restricted $P$-compatible families for $t$. We define $S= \{(c,v,i)\mid c\in Z(t,A_s) \land v\in \sigma(c) \land i\in \{1,2,..,d\}\}$ and $S'=\{(e,v,1)\mid e\in F(t,A_s)\land v\in V(e)\}$. 
Here the triple $(c,v,i)$ indicate $i^{th}$ distinct copy of pair formed by $v$ and $\sigma(c)$ where $v\in \sigma(c)$. Similarly, $(e,v,1)$ indicate the pair formed by $v$ and $V(e)$ where $v\in V(e)$. Given a subset $S^*\subseteq (S\cup S')$, we construct $d$-matched candidate sets of $\sigma(c)$ for every $c\in Z(t,A_s)$ and $1$-matched candidate sets of $V(e)$ for every $e\in F(t,{A_s})$ as follows. For every $c\in Z(t,A_s)$, define $P_c= (V,m_{P_c}) \text{ such that: }$
 \begin{multline}\label{PC_generation}
     \forall v\in \sigma(c), m_{P_c}(v) = |\{(c,v,i)|(c,v,i)\in S^*\}|;
    \forall v\in V\setminus \sigma(c), m_{P_c}(v) =0.
 \end{multline}

That is $m_{P_c}(v)$ equals number of copies of pair $v$ and $\sigma(c)$. If $|P_c|>k$, then we can conclude that $|P_c|$ is not a $d$-matched candidate set and given $S^*$ cannot form $A_s$-restricted $P$-compatible family. Similarly, for every $e\in F(t,{A_s})$, define  $P_e= (V,m_{P_e})\text{ such that: }$
\begin{multline} \label{PE_generation}
\forall v\in V(e), m_{P_e}(v) = |\{(e,v,i)\mid(e,v,i)\in S^* \}|;\forall v\in V\setminus V(e), m_{P_e}(v) =0.
 \end{multline}
 \paragraph{}

 Once we constructed all the $P_c$ and $P_e$ and verified that they are valid $d$-matched candidate set for their respective $\sigma(c)$ and $V(e)$, we can directly check if $\{P_c|\ c \in Z(t,A_s)\} \cup \{P_e|\ e \in F(t,A_s)\}$ satisfies all the conditions of $A_s$-restricted $P$-compatible family or not. Recall that for any $A_s$-restricted $P$-compatible family $F_{P|A_s}$ and corresponding $P_{z} = \biguplus_{P_v \in F_{P|A_s}} P_v$, it should hold that $|P_z|\leq 2k$. This restriction on size allows us to generate all possible $A_s$-restricted $P$-compatible families of $t$ by simply considering all the subsets of $S\cup S'$ of size at most $2k$. If  $|Z(t,A_s)|+|F(t,A_s)|\leq k$, then $|S\cup S'| \leq dk^2 \leq k^3$ as $d$ is at most $k$ and every adhesion $\sigma(c)$ is at most $k$ as well. And the number of subsets of $S\cup S'$ of size at most $2k$ can be bounded by $\sum_{i=0}^{2k} \binom{k^3}{i}$ which is bounded by $2^{O(k \log k)}$, and these subsets can be listed by standard methods which will take time $2^{O(k \log k)}n^{O(1)}$. Thus, there can be at most $2^{O(k \log k)}$ distinct $A_s$-restricted $P$-compatible families of $t$, and they can be listed in time $2^{O(k \log k)}n^{O(1)}$.
 \paragraph{}

 We note that this procedure will generate some repetitions of $A_s$-restricted $P$-compatible families, but we use this procedure for simplicity of analysis and the upper bound obtained from this procedure is sufficient for our purpose. The correctness follows from exhaustive search. This finishes the proof.
 
\qed
 \end{proof}
 
\paragraph{}
We now assume that the entries of the table $M$ table is calculated for every $c\in Z(t)$ and the table $M_E$ is constructed as per above discussed assignment for edges in $E(G_t[\beta(t)])$, note that if $t$ is leaf, then it has no children, yet $M_E$ can be calculated for it as discussed above. For an $S$-compatible set $A_s$ of $\beta(t)$ and $A_s$-restricted $P$-compatible family $\calf_{P|A_s}=\{P_c|\ c \in Z(t,A_s)\} \cup \{P_e|\ e \in F(t,A_s)\}$ of $t$, we define cost of $A_s$ and $\calf_{P|A_s}$ for $t$ as follows.

\begin{equation}\label{cs_calculation}
cs(t,A_s,F_{P|A_s})=\sum_{c\in Z(t,A_s)} M[c, A_s \cap \sigma(c), {P_c},1] + \sum_{e\in F(t,A_s)} M_E[e, A_s \cap V(e), {P_e}].
\end{equation}
 
 \paragraph{}
We define minimum cost of an $S$-compatible set $A_s$ of $\beta(t)$ and $d$-matched candidate set $P$ of $\sigma(t)$ as follows:

\begin{equation}\label{mcs_calculation}
   mcs(t,A_s,P) = min\{ cs(t,A_s, \calf_{P|A_s})| \ \calf_{P|A_s}\text{ is $A_s$-re. $P$-com. family of t}\}.
\end{equation}
\newline

\begin{lemma}\label{MCSProperties}
Given that for every $c\in Z(t)$, the entries corresponding to $c$ of $M$ satisfy property $(1)$ and $(2)$, and the table $M_E$ is constructed as per Construction 3.1, $mcs(t,A_s,P)$ satisfies the following properties.
\begin{enumerate}[(a)]
    \item If $mcs(t,A_s,P) \leq k$, then there exists a partition $(A,B)$ of $G_t$, such that:
    \begin{itemize}
        \item $A \cap \beta(t) = A_s$,
         \item $|E_{G_t}(A,B)|\leq $  $mcs(t,A_s,P)$,
        \item $E_{G_t}(A,B)$ forms a d-matching,
        \item $\forall v\in \sigma(t),\ |N_{G[E_{G_t}(A,B)]}(v)| \leq m_P(v)$.
       
    \end{itemize}
    \item For every partition $(A,B)$ of the entire graph $G$ which satisfies the following prerequisites:
     \begin{itemize}
        \item $A \cap \beta(t) = A_s$,
        \item $|E_G(A,B)|\leq k$,
        \item $E_G(A,B)$ forms a d-matching,
        \item $\forall v\in \sigma(t),\ |N_{G[E_{G_t}(A,B)]}(v)| \leq m_P(v)$.
        
    \end{itemize}
    It holds that $|E_{G_t}(A, B)|\geq mcs(t,A_s,P)$.
\end{enumerate}
\end{lemma}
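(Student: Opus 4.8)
My plan rests on a structural decomposition of $G_t$ that I would establish first. Using the definition $G_t = G[\gamma(t)] - E(G[\sigma(t)])$ together with the standard tree-decomposition fact that $\beta(t)\cap\gamma(c)=\sigma(c)$ for every child $c\in Z(t)$, I would show that the vertex set decomposes as the disjoint union of $\beta(t)$ and the sets $\alpha(c)$, $c\in Z(t)$, and more importantly that the edges decompose disjointly as $E(G_t)=E(G_t[\beta(t)])\ \cup\ \bigcup_{c\in Z(t)}E(G_c)$. The one point needing care is that no edge is double counted: an edge with both endpoints in $\gamma(c)$ lies inside $G_c$ unless both endpoints are in $\sigma(c)$, in which case it is excluded from $G_c$ by the subtraction of $E(G[\sigma(c)])$ and instead appears in $E(G_t[\beta(t)])$. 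This decomposition is what lets me treat the children and the bag edges as independent pieces whose crossing edges simply add up.

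For part (a), I would take a family $F_{P|A_s}=\{P_c\}\cup\{P_e\}$ attaining the minimum $mcs(t,A_s,P)\le k$, so every summand of $cs(t,A_s,F_{P|A_s})$ is finite. For each broken edge $e$ its summand $M_E[e,A_s\cap V(e),P_e]=1$, and for each broken child $c$ the bound $M[c,A_s\cap\sigma(c),P_c,1]\le k$ lets me invoke property $(1)$ to obtain a partition $(A_c,B_c)$ of $G_c$ with both sides non-empty, with $A_c\cap\sigma(c)=A_s\cap\sigma(c)$ (reorienting $A_c,B_c$ using the symmetry of property $(1)$ in $S$ versus $\bar S$ if necessary), with at most $M[c,\dots]$ crossing edges forming a $d$-matching and respecting $P_c$ on $\sigma(c)$. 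I then glue: bag vertices follow $A_s$; for broken $c$ the vertices of $\alpha(c)$ follow $(A_c,B_c)$; for every non-broken child $\sigma(c)$ lies entirely on one side, so I place all of $\alpha(c)$ on that same side, contributing no crossing edge inside $G_c$. The gluing is consistent precisely because distinct children share vertices only through adhesion vertices living in $\beta(t)$. By the edge decomposition the crossing edges of the resulting $(A,B)$ are exactly the broken bag edges plus the crossing edges inside the broken $G_c$, giving $|E_{G_t}(A,B)|\le mcs(t,A_s,P)$. For the $d$-matching and the $P$-constraint I would bound, for each vertex $v$, the number of incident crossing edges: a deep vertex $v\in\alpha(c)$ inherits the $d$-matching of $(A_c,B_c)$, while for $v\in\beta(t)$ the disjointness of the edge decomposition gives that this count is at most $m_{P_z}(v)$, which is $\le d$ for all $v$ and $\le m_P(v)$ for $v\in\sigma(t)$ by the definition of an $A_s$-restricted $P$-compatible family.

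For part (b), given a global partition $(A,B)$ of $G$ meeting the prerequisites, I would exhibit a single compatible family whose cost is at most $|E_{G_t}(A,B)|$. I set $P_e$ to assign multiplicity $1$ to both endpoints of each broken bag edge, and for each broken child $c$ I define $P_c$ by $m_{P_c}(u)=|N_{G[E_{G_c}(A,B)]}(u)|$ for $u\in\sigma(c)$, i.e.\ the true number of crossing edges of $G_c$ incident to $u$. Checking compatibility is the crux: $m_{P_c}(u)\le d$ since $(A,B)$ is a $d$-matching, and $|P_c|\le|E_{G_c}(A,B)|\le k$ because $G_c$ contains no crossing edge with both endpoints in $\sigma(c)$ (those edges are removed from $G_c$), so each crossing edge of $G_c$ is charged to at most one adhesion vertex; summing then gives $|P_z|\le 2|E_{G_t}(A,B)|\le 2k$, while $m_{P_z}(v)$ counts a subset of the crossing edges at $v$, hence $m_{P_z}(v)\le d$ and $m_{P_z}(v)\le|N_{G[E_{G_t}(A,B)]}(v)|\le m_P(v)$ for $v\in\sigma(t)$. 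Applying property $(2)$ to each broken child (its prerequisites hold by construction, and both sides of $G_c$ are non-empty since $c$ is broken, justifying the $n_e=1$ bound) yields $M[c,A_s\cap\sigma(c),P_c,1]\le|E_{G_c}(A,B)|$. Summing over broken children and adding $|E^{A_s}_{ab}|$ for the broken bag edges, and comparing with $|E_{G_t}(A,B)|=|E^{A_s}_{ab}|+\sum_{c\in Z(t)}|E_{G_c}(A,B)|$ via the edge decomposition, gives $mcs(t,A_s,P)\le cs(t,A_s,F_{P|A_s})\le|E_{G_t}(A,B)|$.

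The main obstacle I anticipate is the per-vertex crossing-edge bookkeeping rather than any single deep idea: I must argue that $m_{P_z}$ simultaneously and tightly accounts for crossing edges at bag and adhesion vertices without double counting, which hinges entirely on the disjointness of the edge decomposition and on the fact that $G_c$ omits all edges internal to $\sigma(c)$. The second delicate point is the consistent reorientation of the child partitions in part (a) so that they agree on the shared adhesion vertices, which the symmetry of property $(1)$ in $S$ versus $\bar S$ makes possible.
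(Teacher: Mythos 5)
Your proposal is correct and follows essentially the same route as the paper's proof: for part (a) you glue the child partitions supplied by property $(1)$ (reorienting via the $S$/$\bar S$ symmetry, exactly as the paper does with its $Z_a$/$Z_b$ split) and control the crossing edges through the edge-disjointness of $G_t[\beta(t)]$ and the $G_c$'s, which is the paper's ``closure'' claim in disguise; for part (b) you define $P_c$ from the true crossing-neighbour counts and $P_e$ with multiplicity $1$ on both endpoints, verify compatibility, and apply property $(2)$ child by child, just as the paper does. The only cosmetic difference is that you place $\alpha(c)$ directly for non-broken children rather than invoking the trivial table entry $M[c,\cdot,\emptyset,0]=0$, which changes nothing.
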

We note that if $mcs(t,A_s,P)> k$, then there doesn't exist a partition $(A,B)$ of $G$ satisfying prerequisites of property $(b)$.

\begin{proof}
Overall the template of the proof is motivated from \cite[Proofs of Claim 9 and Claim 10]{DBLP:journals/talg/CyganKLPPSW21}, however we need {\kdcut} specific arguments and calculations to derive the proof.
For property ($a$), let $F'_{P|A_s}=\{P_c|\ c \in Z(t,A_s)\} \cup \{P_e|\ e \in F(t,A_s)\}$ be an $A_s$-restricted $P$-compatible family for which $mcs(t,A_s,P) = cs(t,A_s,F'_{P|A_s})$. If $cs(t,A_s,F'_{P|A_s})\leq k$, then  $M[c,A_s\cap \sigma(c),P_c,1] \leq k$ for every $c \in Z(t,A_s)$. Further, for every $c \in Z(t)\setminus Z(t,A_s)$, either $A_s\cap \sigma(c)= \sigma(c)$  or $A_s\cap \sigma(c)= \emptyset$. As per Proposition \ref{empty} we have that $M[c,A_s\cap \sigma(c),\emptyset,0] =0$ for every $c \in Z(t)\setminus Z(t,A_s)$. For every $c\in Z(t,A_s)$, let $(A_c,B_c)$ be the partition of $G_c$ corresponding to property ($1$) of $M[c,A_s\cap \sigma(c),P_c,1]$. And for every $c \in Z(t)\setminus Z(t,A_s)$, let $(A_c,B_c)$ be the partition of $G_c$ corresponding to property ($1$) of $M[c,\emptyset,\emptyset,0]$. Let $Z_a = \{c|\ c\in Z(t)\land (A_c \cap \sigma(c) = A_s\cap \sigma(c))\}$ and $Z_b = \{c|\ c\in Z(t) \land (A_c \cap \sigma(c) = \sigma(c)\setminus A_s)\}$. Observe that $B_c \cap \sigma(c) = A_s\cap \sigma(c)$ for every $c\in Z_b$. We define 
\begin{align*}
    A = (\bigcup_{c\in Z_a} A_c)\ \cup\ (\bigcup_{c\in Z_b} B_c)\ \cup A_s; \ \ \  B =  V(G_t)\setminus A.
\end{align*}
It is now remained to prove that partition $(A,B)$ of $G_t$ satisfies property (\textit{a}) for $mcs(t,A_s,P)$. As $A_c \cap \sigma(c) = A_s\cap \sigma(c)$ for every $c\in Z_a$ and $B_c \cap \sigma(c) = A_s\cap \sigma(c)$ for every $c\in Z_b$, further due to  $\sigma(c)$ being the only vertices that  every $G_c$ shares with $\beta(t)$, and $A_s\subseteq \beta(t)$, we get the property that $A\cap \beta(t) = A_s$.
For every $c\in Z_a$, $A\cap V(G_c)= A_c \cup (A\cap \beta(t)\cap \sigma(c))= A_c\cup (A_s\cap \sigma(c)) = A_c$. Similarly, for every $c\in Z_b$, $A\cap V(G_c)= B_c$. Consider the following claim.

\begin{claim}\label{closure}
 $E_{G_t}(A,B) \subseteq (\bigcup_{c\in Z(t,A_s)}E_{G_c}(A_c,B_c) \bigcup F(t,{A_s}))$.
\end{claim}
Proof of Claim \ref{closure}.
Assume to the contrary that there exist an edge $e=(u,v)\in E_{G_t}(A,B)$ such that $e\not \in (\bigcup_{c\in Z(t,A_s)}E_{G_c}(A_c,B_c) \bigcup F(t,{A_s}))$. Without loss of generality let $u\in A$ and $v\in B$. If $e\in E(G_t[\beta(t)])$, then $u\in A_s$ and $v\in \beta(t)\setminus A_s$ contradicting  $e\not \in F(t,A_s)$. If $e\in E(G_c)$ for a $c\in Z_a$, then due to $A\cap V(G_c)= A_c$ we have $u\in A_c$ and $v\in B_c$ contradicting $e\not \in E_{G_c}(A_c,B_c)$, further $c\not \in Z(t)\setminus Z(t,A_s)$, as for every $c\in Z(t)\setminus Z(t,A_s)$ the corresponding partition $(A_c,B_c)$ is trivial.  If $e\in G_c$ for a $c\in Z_b$, then similar arguments as above holds. This finishes the proof of Claim \ref{closure}.
\paragraph{}

For every edge $e=\{u,v\}\in F(t,A_s)$, $A_s \cap V(e) \in \{u,v\}$, and hence $M_E[e, A_s \cap V(e), {P_e}]\geq 1$ as per the Construction 3.1. This implies that $ |F(t,A_s)|$ is at most $\sum_{e \in F(t,A_s)} M_E[e, A_s \cap V(e), {P_e}]$, and $|\bigcup_{c\in Z(t,A_s)}E_{G_c}(A_c,B_c)|$  is at most $\sum_{c\in Z(t,A_s)}  M[c, A_s \cap \sigma(c), {P_c},1]$. By Claim \ref{closure} we can conclude that $|E_{G_t}(A,B)|$ is at most $cs(t,A_s,F'_{P|A_s})$ which is same as $mcs(t,A_s,P)$.
\paragraph{}

To show that $E_{G_t}(A,B)$ forms a $d$-matching it is sufficient to show that for every vertex $v\in V(G_t)$, $|N_{G[E_{G_t}(A,B)]}(v)|\leq d$. 
Since $cs(t,A_s,F'_{P|A_s})\leq k$, and for every $e \in F(t,A_s)$, $A_s \cap V(e) \in \{u,v\}$, it should hold that $M_E[e, A_s \cap V(e), {P_e}]= 1$ which is as per Construction 3.1 only possible when $m_{P_e}(u) = m_{P_e}(v) = 1$. This ensures that for every vertex $v\in \beta(t)$ and for every edge $e\in F(t,A_s)$, $|N_{G[E_{G_t}(A,B) \cap e]}(v)| \leq m_{P_e}(v)$. Further, for every $c\in Z(t,A_s)$, partition $(A_c,B_c)$ of $G_c$ satisfies property $(1)$ of $M[c,A_s\cap \sigma(c),P_c,1]$, and $\forall v\in \sigma(c)$, $ \ |N_{G[E_{G_c}(A_c,B_c)]}(v)| \leq m_{P_c}(v)$. Recalling $\beta(t) \cap V(G_c) = \sigma(c)$, it holds that for every $v\in \beta(t)$ and every $c\in Z(t,A_s)$, $|N_{G[E_{G_c}(A_c,B_c)]}(v)| \leq m_{P_c}(v)$. Let $P'_{z} = \biguplus_{P_v \in F'_{P|A_s} }P_v$, using Claim \ref{closure} we can conclude that $\forall v \in \beta(t)$, $|N_{G[E_{G_t}(A,B)]}(v)| \leq m_{P'_z}(v)$. As $F'_{P|A_s}$ is an $A_s$-restricted $P$-compatible family, it hods that $\forall v\in V$, $m_{P'_z}(v)\leq d$. This conclude that $\forall v \in \beta(t)$, $|N_{G[E_{G_t}(A,B)]}(v)| \leq d$.
Further, by assumption $E_{G_c}(A_c,B_c) $ forms a $d$-matching for every $c\in Z(t)$. Thus, for any vertex $v \in V(G_c)$, $|N_{G[E_{G_c}(A_c,B_c)}|\leq d$. If $v\not \in \sigma(c)$, then no edge in $E_{G_t}(A,B)\setminus E_{G_c}(A_c,B_c)$ incidents on $v$. Hence, $\forall v \in(V(G_t)\setminus \beta(t))$, $|N_{G[E_{G_t}(A,B)}(v)|\leq d$. This concludes that $E_{G_t}(A,B)$ is a $d$-matching. As $F'_{P|A_s}$ is an $A_s$-restricted $P$-compatible family, it holds that $\forall v\in \sigma(t)$, $m_{P'_z}(v)\leq m_P(v)$. As $\sigma(t)\subseteq \beta(t)$, it holds that  $\forall v\in \sigma(t),\ |N_{G[E_{G_t}(A,B)]}(v)| \leq m_P(v)$. This finishes the proof for property $(a)$.

\paragraph{}
For property $(b)$, let $(A,B)$ be a partition of $G$ which satisfies all the prerequisites of property $(b)$ for given $t$, $A_s$ and $P$. Consider the following assignments.
\begin{align*}
    \forall c\in Z(t,A_s), P_c =(V, m_{P_c}) \text{ such that } \forall v \in  \sigma(c),\  m_{P_c}(v) = |N_{G[E_{G_c}(A,B)]}(v)|,\\
   \text{ and } \forall v \in  V\setminus \sigma(c),\  m_{P_c}(v) = 0;
\end{align*}
and
\begin{align*}
     \forall e\in F(t,A_s), P_e = (V, m_{P_e}) \text{ such that } \forall v \in  V(e),\  m_{P_e}(v) = 1,\\
   \text{ and } \forall v \in  V\setminus V(e),\  m_{P_e}(v) = 0.
\end{align*}

\paragraph{}
 As $E_G(A,B)$ is a $d$-matching, $|N_{G[E_{G_c}(A,B)]}(v)| \leq d$ for every $ v\in \sigma(c)$. Recalling the definition of $G_c$ which excludes every edge with both the endpoints in $\sigma(c)$. Thus, every edge in $E_{G_c}(A,B)$ can have at most one endpoint in $\sigma(c)$ and knowing that $|E_{G_c}(A,B)|\leq k$ we can conclude that $|P_c|\leq k$ and $P_c$ is a $d$-matched candidate set for $\sigma(c)$ for every $c\in Z(t,A_s)$. Further, for every $e\in F(t,A_s)$, $P_e$ is a $1$-matched candidate set for $V(e)$. 
\paragraph{}

We now show that $F_{P|A_s} = \{P_c|\ c\in Z(t,A_s)\} \cup \{P_e|\ e\in F(t,A_s)\}$ is an $A_s$-restricted $P$-compatible family. Let $P_{z} = \biguplus_{P_v \in F_{P|A_s}} P_v$. Observe that all $G_c$ and $G_t[\beta(t)]$ are pairwise edge disjoint 
subgraphs of $G$ because of the exclusion of edges with both endpoints in $\sigma(c)$ in $G_c$. Combining the fact that $\sum_{v\in V(G)} |N_{G[E(A,B)]}(v)| = 2|E(A,B)|$, $|E(A,B)|$ is at most $k$ and $E(A,B)$ is a $d$-matching, we can assert that $|P_z|\leq 2k$ and $m_{P_z}(v)\leq d$ for every $ v \in V$. Further, as $(A,B)$ satisfies prerequisites of property $(b)$, it is known that $|N_{G[E_{G_t}(A,B)]}(v)| \leq m_P(v)$ for every $v\in \sigma(t)$. Again, the fact that all $G_c$ and $G_t[\beta(t)]$ are pairwise edge disjoint subgraphs of $G_t$ helps us assert that $m_{P_z}(v) \leq m_P(v)$ for every  $ v\in \sigma(t)$. This concludes that $F_{P|A_S}$ is an $A_s$-restricted $P$-compatible family for $t$.
\paragraph{}

We move on to show that $(A,B)$ satisfies prerequisites of property $(2)$ for $M[c,A_s\cap \sigma(c),P_c,1]$ for every $c\in Z(t,A_s)$. As $A_s\cap \sigma(c) = A\cap\beta(t)\cap\sigma(c)$, and as $\sigma(c) \subseteq \beta(t)$ we have $A\cap \sigma(c) = A_s\cap \sigma(c)$. Satisfaction of rest of the prerequisites can be directly checked using the fact that $(A,B)$ satisfies prerequisites of property $(b)$ for $mcs(t,A_s,P)$ and recalling the assignment of $P_c$, further it is straightforward to see that both $A\cap V(G_c)$ and $B\cap V(G_c)$ are non-empty for every $c\in Z(t,A_s)$ because $A_s\cap \sigma(c)$ and $\sigma(c)\setminus A_s$ are both non-empty as per the definition of  $Z(t,A_s)$ and $A\cap \sigma (c) = A_s\cap \sigma(c)$ for every $c\in Z(t,A_s)$.
\paragraph{}

For an edge $e=(u,v)\in F(t,A_s)$, we can directly check that $M_E[t,A_s\cap V(e),P_e]=1$ as per the Construction 3.1, as in this case $P_e(u) = P_e(v) = 1$ and $(A_s\cap V(e)) = \{u\} $ or $\{v\}$. Thus, recalling that all $G_c$ and $G_t[\beta(t)]$ are all pairwise edge disjoint, we conclude that $|E_{G_t}(A,B)|$ is at least $cs(t,A_s,F_{P|A_s})$ calculated as per equation (\ref{cs_calculation}). Further, $mcs(t,A_s,P) \leq cs(t,A_s,F_{P|A_s})$ due to the minimality over all $A_s$- restricted $P$-compatible families. This finishes the proof.
\qed
\end{proof}

\begin{proposition}\label{broken}
For every $A_s$-restricted $P$-compatible family $\calf_{P|A_s}$ of $t$, \\ $cs(t,A_s,F_{P|A_s}) \geq |Z(t,A_s)|+ |F(t,A_s)|$.
\end{proposition}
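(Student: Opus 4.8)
The plan is to bound the two summands of $cs(t,A_s,F_{P|A_s})$ separately, showing that each broken child and each broken edge contributes at least $1$ to the total. Recall from equation (\ref{cs_calculation}) that
\begin{equation*}
cs(t,A_s,F_{P|A_s})=\sum_{c\in Z^{A_s}_{ab}} M[c, A_s \cap \sigma(c), {P_c},1] + \sum_{e\in E^{A_s}_{ab}} M_E[e, A_s \cap V(e), {P_e}].
\end{equation*}
Since every entry of both tables lies in $\{0,1,\dots,k,\infty\}$, it suffices to argue that every term of each sum is at least $1$; summing over the two index sets then immediately yields $cs(t,A_s,F_{P|A_s}) \geq |Z^{A_s}_{ab}| + |E^{A_s}_{ab}|$.

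For the first sum I would fix $c \in Z^{A_s}_{ab}$ and invoke Proposition \ref{nonempty} at the node $c$. The point to check is that its hypotheses hold: by the very definition of the set of broken children, both $S_c = A_s \cap \sigma(c)$ and $\bar{S_c} = \sigma(c) \setminus S_c$ are non-empty, and since $F_{P|A_s}$ is an $A_s$-restricted $P$-compatible family, $P_c$ is a $d$-matched candidate set of $\sigma(c)$. Proposition \ref{nonempty} then gives $M[c, A_s \cap \sigma(c), P_c, 1] \geq 1$.

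For the second sum I would fix a broken edge $e = (u,v) \in E^{A_s}_{ab}$. Here the key observation is that $V(e)$ has exactly two vertices, so the requirement that both $A_s \cap V(e)$ and $V(e) \setminus A_s$ be non-empty forces $A_s \cap V(e)$ to be exactly one of $\{u\}$ or $\{v\}$. Inspecting the explicit assignment of $M_E$, every cell of the form $M_E[e,\{u\},\cdot]$ or $M_E[e,\{v\},\cdot]$ is either $1$ or $\infty$, and never $0$; in both cases the value is at least $1$, so $M_E[e, A_s \cap V(e), P_e] \geq 1$.

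Combining the two bounds gives the claim. I do not expect a genuine obstacle here: this is essentially a bookkeeping lemma expressing that each broken component forces at least one crossing edge. The only points requiring any care are verifying that the hypotheses of Proposition \ref{nonempty} genuinely apply at each broken child (which is immediate from the definition of $Z^{A_s}_{ab}$ together with the $P$-compatibility of the family) and noticing that for a broken edge the crossing side is always a singleton, which rules out the cost-$0$ cells of $M_E$.
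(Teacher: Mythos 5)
Your proof is correct and follows essentially the same route as the paper's: both arguments bound each broken-child term by $1$ via Proposition~\ref{nonempty} (using that $S_c$ and $\bar{S_c}$ are non-empty by definition of $Z^{A_s}_{ab}$) and each broken-edge term by $1$ by observing that $A_s\cap V(e)$ is a singleton, so the corresponding $M_E$ cell is $1$ or $\infty$. No gaps.
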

\begin{proof}
We use arguments similar to \cite[Proof of Claim 11]{DBLP:journals/talg/CyganKLPPSW21}.
It is sufficient to show that every $c\in Z(t,A_s)$ or $e \in F(t,A_s)$ contributes a positive value to the computation of $cs(t,A_s,F_{P|A_s})$ in Equation (\ref{cs_calculation}). By Construction 3.1, for every $e=(u,v)\in F(t,A_s)$, $M_E[e,A_s\cap V(E),P_e]\geq 1$  as $A_s\cap V(E) =\{u\}$ or $A_s\cap V(E) =\{v\}$.
\paragraph{}

For every $c \in Z(t,A_s)$, observe that $c$ contributes $M[c,A_s\cap \sigma(c),P_c]$ in equation (\ref{cs_calculation}). For every $c \in Z(t,A_s)$ both $A_S\cap \sigma(c)$ and $\sigma(c) \setminus A_s$ are non empty. Recalling proposition \ref{nonempty} we conclude $M[c,A_s\cap \sigma(c),P_c]\geq 1$. This finishes the proof.
\qed \end{proof}

\begin{claim} \label{cscalcbond}
Given that the entries of $M$ being computed for every $c\in Z(t)$, and $M_E$ is constructed as per Construction 3.1. For an $S$-compatible set $A_s$ of $\beta(t)$ and an $A_s$-restricted $P$-compatible family $\calf_{P|A_s}$ of $t$, $cs(t,A_s,F_{P|A_s})$ can be computed in time  $2^{O(k\log k)}n^{O(1)}$. 
\end{claim}
\begin{proof}
As discussed earlier,  computation of $Z(t,A_s)$ and $F(t,A_s)$ will take $n^{O(1)}$ time.
 Number of entries in $M$ corresponding to every $c\in Z(t)$ are bounded by $2^{O(k\log k)}$, further there are $O(1)$ entries in $M_E$ for every $e\in E(G_t[\beta(t)])$. To calculate  $cs(t,A_s,F_{P|A_s})$ as per equation $\ref{cs_calculation}$, we need to retrieve at most $|Z(t,A_s)|+ |F(t,A_s)| \leq n^{O(1)}$ entries for  computation as per equation $\ref{cs_calculation}$. Assuming we do a linear search, the time taken to retrieve required entries is bounded by $2^{O(k\log k)}n^{O(1)}$.
\qed \end{proof}

\begin{lemma}\label{mcs_lemma}
For an $S$-compatible set $A_s$ of $\beta(t)$, in time $2^{O(k\log k)}n^{O(1)}$ we can either decide that $mcs(t,A_s,P) > k$ or calculate $mcs(t,A_s,P)$.
\end{lemma}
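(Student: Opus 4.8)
The plan is to reduce the computation of $mcs(t,A_s,P)$, which by its definition (\ref{mcs_calculation}) is a minimization over the potentially enormous collection of all $A_s$-restricted $P$-compatible families, to a minimization over an explicitly enumerable collection of size $2^{O(k\log k)}$. First I would compute the sets $Z^{A_s}_{ab}$ and $E^{A_s}_{ab}$ of broken children and broken edges, which takes $n^{O(1)}$ time as already noted after their definition. The whole argument then turns on a case distinction on the quantity $|Z^{A_s}_{ab}|+|E^{A_s}_{ab}|$.

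In the first case, $|Z^{A_s}_{ab}|+|E^{A_s}_{ab}| > k$. Here Proposition \ref{broken} yields $cs(t,A_s,F_{P|A_s}) \geq |Z^{A_s}_{ab}|+|E^{A_s}_{ab}| > k$ for \emph{every} $A_s$-restricted $P$-compatible family $F_{P|A_s}$, so the minimum in (\ref{mcs_calculation}) exceeds $k$ and I can immediately report $mcs(t,A_s,P) > k$ without enumerating any family.

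In the second case, $|Z^{A_s}_{ab}|+|E^{A_s}_{ab}| \leq k$, which is exactly the hypothesis of Proposition \ref{FPAS}. I would invoke it to list all $A_s$-restricted $P$-compatible families of $t$; there are at most $2^{O(k\log k)}$ of them and the listing takes $2^{O(k\log k)}n^{O(1)}$ time. For each listed family I would evaluate $cs(t,A_s,F_{P|A_s})$ via (\ref{cs_calculation}) using Claim \ref{cscalcbond}, at cost $2^{O(k\log k)}n^{O(1)}$ per family, since the tables $M[c,\cdot,\cdot,\cdot]$ and $M_E[e,\cdot,\cdot]$ are available by assumption. Taking the minimum of these values gives exactly $mcs(t,A_s,P)$ by (\ref{mcs_calculation}); if that minimum exceeds $k$ I report $mcs(t,A_s,P)>k$, and otherwise I output its value.

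The running time is dominated by the second case: $n^{O(1)}$ preprocessing, $2^{O(k\log k)}n^{O(1)}$ enumeration, and at most $2^{O(k\log k)}$ cost evaluations of $2^{O(k\log k)}n^{O(1)}$ each, for a total of $2^{O(k\log k)}n^{O(1)}$. The one point that needs care is that exhaustive minimization over families is affordable only when their number is controlled, which is precisely why the dichotomy is needed: without the bound $|Z^{A_s}_{ab}|+|E^{A_s}_{ab}|\le k$ the family count is uncontrolled, but in that regime Proposition \ref{broken} already certifies $mcs(t,A_s,P)>k$, so the expensive enumeration is never triggered when it would blow up the running time.
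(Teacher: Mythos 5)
Your proposal is correct and follows exactly the same route as the paper's proof: test whether $|Z^{A_s}_{ab}|+|E^{A_s}_{ab}|\leq k$, conclude $mcs(t,A_s,P)>k$ via Proposition \ref{broken} if not, and otherwise enumerate all $A_s$-restricted $P$-compatible families via Proposition \ref{FPAS} and minimize $cs(t,A_s,F_{P|A_s})$ using Claim \ref{cscalcbond}. The running-time accounting also matches the paper's.
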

\begin{proof}
For given $A_s$, we check if $|Z(t,A_s)|+ |F(t,A_s)| \leq k$, if not, then using the Proposition \ref{broken} we conclude that $mcs(t,A_s,P) > k$ . Else, if $|Z(t,A_s)|+ |F(t,A_s)| \leq k$, then we use proposition $\ref{FPAS}$ to get all the $2^{O(k\log k)}$ distinct $A_s$- restricted $P$-compatible families of $t$ in time  $2^{O(k\log k)}n^{O(1)}$ and calculate  $mcs(t,A_s,P)$ as per equation \ref{mcs_calculation}, where we need to compute $cs(t,A_s,F_{P|A_s})$ for $2^{O(k\log k)}$ distinct $A_s$- restricted $P$-compatible families, each of which can be computed using claim $\ref{cscalcbond}$ in time  $2^{O(k\log k)}n^{O(1)}$. Thus, we conclude that  computation of $mcs(t,A_s,P)$ would take time $2^{O(k\log k)}n^{O(1)}$.
\qed \end{proof}
\paragraph{}

We now move on to give an assignment to $M[t,S,P,1]$. Consider the following.
 \begin{multline}\label{MINC}
    \textsc{MIN}_c = \min \{\min \{ M[c,\emptyset,P_c,1]\ |\ P_c \text{ is a $d$- mat. can. set of } \sigma(c) \text{ such that }\\ \forall v\in \sigma(t), m_{P_c}(v)\leq m_P(v)\}\ \ |\  c\in Z(t)\}.
\end{multline}
\ If $t$ is a leaf node and that $Z(t)$ is empty, then we set $\textsc{MIN}_c=\infty$.
\begin{multline}\label{MINBT}
\text{MIN}_{\beta(t)} = \min\{ mcs(t,A_s,P)\ |\  A_s  \text{ is an }S\text{ -compatible set of } \beta(t) \}.
\end{multline}

Consider the following assignment of $M[t,S,P,1]$.

\begin{enumerate}
    \item Case: $S=\emptyset$ or $S=\sigma(t)$.

    \begin{equation}\label{mtsp_empty}
        M[t,S,P,1] = min\{ \text{MIN}_c, \text{MIN}_{\beta(t)}\}.
    \end{equation}
    
    \item Case: $S\not =\emptyset$ and $S\not =\sigma(t)$.
    \begin{equation}\label{mtsp_nonempty}
         M[t,S,P,1] = \text{MIN}_{\beta(t)}.
    \end{equation}
  We note that in equations (\ref{mtsp_empty}) and (\ref{mtsp_nonempty}) if the assignment of $ M[t,S,P,1]$ is $>k$, then we set $M[t,S,P,1]=\infty$.
\end{enumerate}

We now argue the correctness of the assignment of $ M[t,S,P,1]$.

\begin{lemma}\label{M[t,S,P]_correctness}
Given that the entries of $M$ correspond to every every $c\in Z(t)$ satisfy property $(1)$ and $(2)$, and  $mcs(t,A_s,P)$ satisfies property $(a)$ and $(b)$ for every $S$-compatible set $A_s$ of $\beta(t)$, assignment of $M[t,S,P,1]$ as per equation (\ref{mtsp_empty}) and (\ref{mtsp_nonempty}) satisfies property $(1)$ and $(2)$.
\end{lemma}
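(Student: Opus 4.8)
The plan is to verify properties $(1)$ and $(2)$ directly from the recurrence, exploiting that the value $M[t,S,P,1]$ is the minimum of two quantities, each corresponding to one way a cut can behave near $t$: $MIN_{\beta(t)}$ accounts for cuts that genuinely split the bag $\beta(t)$, while $MIN_c$ accounts for cuts that leave $\beta(t)$ (hence $\sigma(t)$ and each $\sigma(c)$) entirely on one side but split the interior $\alpha(c)$ of some child subgraph $G_c$. Note that $MIN_c$ is only included when $S\in\{\emptyset,\sigma(t)\}$, which is exactly the regime where $A\cap\sigma(t)\in\{\emptyset,\sigma(t)\}$ is permitted by the prerequisite $A\cap\sigma(t)\in\{S,\bar S\}$.

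For property $(1)$, assume $M[t,S,P,1]\le k$ and split on which term attains the minimum. If it is $MIN_{\beta(t)}=mcs(t,A_s,P)$ for the minimizing $S$-compatible set $A_s$, then property $(a)$ of Lemma~\ref{MCSProperties} hands us a partition $(A,B)$ of $G_t$ with $A\cap\beta(t)=A_s$, the required $d$-matching and neighbor bounds, and $|E_{G_t}(A,B)|\le mcs(t,A_s,P)$; it is nontrivial because $A_s$ is a nonempty proper subset of $\beta(t)$, and $A\cap\sigma(t)=A_s\cap\sigma(t)\in\{S,\bar S\}$ by $S$-compatibility. If instead it is $MIN_c=M[c,\emptyset,P_c,1]$, then property $(1)$ for the child gives a nontrivial partition $(A_c,B_c)$ of $G_c$ with $A_c\cap\sigma(c)\in\{\emptyset,\sigma(c)\}$; placing the $\sigma(c)$-free side of this partition into $A$ and the rest of $V(G_t)$ into $B$ yields a nontrivial partition of $G_t$ whose cut equals $E_{G_c}(A_c,B_c)$. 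Here I would use the compact-decomposition identity $N(\alpha(c))=\sigma(c)$ to argue that vertices of $\sigma(t)\setminus\sigma(c)$ see no neighbor across the cut, and the consistency condition $m_{P_c}(v)\le m_P(v)$ to bound the vertices of $\sigma(t)\cap\sigma(c)$, establishing all requirements of property $(1)$, including $A\cap\sigma(t)=\emptyset\in\{S,\bar S\}$ since $S\in\{\emptyset,\sigma(t)\}$.

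For property $(2)$, take any partition $(A,B)$ of $G$ meeting the prerequisites with both $V(G_t)\cap A$ and $V(G_t)\cap B$ nonempty, and distinguish whether $\beta(t)$ is split. If $A\cap\beta(t)\notin\{\emptyset,\beta(t)\}$, then by $(k,k)$-edge-unbreakability of $\beta(t)$ one of $A\cap\beta(t)$, $B\cap\beta(t)$ has size at most $k$; choosing that side as $A_s$ (swapping $A,B$ if needed, which is harmless as all conditions are symmetric) makes $A_s$ an $S$-compatible set, so property $(b)$ gives $|E_{G_t}(A,B)|\ge mcs(t,A_s,P)\ge MIN_{\beta(t)}\ge M[t,S,P,1]$. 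If instead $\beta(t)$ lies entirely on one side, then so does $\sigma(t)$, forcing $S\in\{\emptyset,\sigma(t)\}$; since $V(G_t)$ is still split, some child $c$ has its interior $\alpha(c)$ split while $\sigma(c)$ stays on one side, so the restriction of $(A,B)$ to $G_c$ is nontrivial with $A\cap\sigma(c)\in\{\emptyset,\sigma(c)\}$. I would then define $P_c$ on $\sigma(c)$ by $m_{P_c}(v)=|N_{G[E_{G_c}(A,B)]}(v)|$, verify it is a $d$-matched candidate set of $\sigma(c)$ consistent with $P$, and invoke property $(2)$ of the child to obtain $|E_{G_t}(A,B)|\ge|E_{G_c}(A,B)|\ge M[c,\emptyset,P_c,1]\ge MIN_c\ge M[t,S,P,1]$.

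The main obstacle is this non-split case of property $(2)$: checking that the constructed $P_c$ is genuinely a $d$-matched candidate set consistent with $P$. The bound $m_{P_c}(v)\le d$ uses that $E(A,B)$ is a $d$-matching; $|P_c|\le k$ uses that $G_c$ contains no edge inside $\sigma(c)$, so each edge of $E_{G_c}(A,B)$ charges at most one vertex of $\sigma(c)$, together with $|E_{G_c}(A,B)|\le|E_G(A,B)|\le k$; and the consistency $m_{P_c}(v)\le m_P(v)$ for $v\in\sigma(t)$ relies on $E_{G_c}(A,B)\subseteq E_{G_t}(A,B)$ (since $G_c\subseteq G_t$) together with the prerequisite neighbor bound, noting that vertices of $\sigma(t)\setminus\sigma(c)$ receive $m_{P_c}(v)=0$. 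Finally, I would dispatch the $\infty$-capping uniformly: whenever the raw right-hand side of equation~(\ref{mtsp_empty}) or~(\ref{mtsp_nonempty}) exceeds $k$, any partition meeting the prerequisites would force $|E_{G_t}(A,B)|>k\ge|E_G(A,B)|\ge|E_{G_t}(A,B)|$, a contradiction, so no such partition exists and property $(2)$ holds vacuously while property $(1)$ has no hypothesis to discharge.
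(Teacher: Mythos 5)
Your proposal is correct and follows essentially the same route as the paper's proof: case analysis on which of $MIN_{\beta(t)}$ or $MIN_c$ attains the minimum for property $(1)$ (invoking property $(a)$ of Lemma~\ref{MCSProperties} and property $(1)$ of the child, respectively), and for property $(2)$ a split on whether $A\cap\beta(t)$ is a nonempty proper subset of $\beta(t)$, using $(k,k)$-edge-unbreakability plus property $(b)$ in the split case and the constructed child candidate sets $P_c$ with $m_{P_c}(v)=|N_{G[E_{G_c}(A,B)]}(v)|$ plus property $(2)$ of the child in the non-split case. Your explicit treatment of the $\sigma(t)\setminus\sigma(c)$ vertices via $N(\alpha(c))=\sigma(c)$ and of the $\infty$-capping is consistent with (and slightly more detailed than) the paper's argument.
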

\begin{proof}

Overall the template of the proof is motivated from \cite[Proofs of Claim 9 and Claim 10]{DBLP:journals/talg/CyganKLPPSW21}, however we need {\kdcut} specific arguments and calculations to derive the proof.
We will divide the proof in the following cases.
For property $(1)$, consider the following two cases.\\
\textbf{Case 1a:}  If $M[t,S,P,1] = \text{MIN}_{\beta(t)}$.\\
$M[t,S,P,1]\leq k$, then let $A_s$ be the $S$-compatible set such that $M[t,S,P,1]= mcs(t,A_s,P)$. Let $(A,B)$ be the partition of $G_t$ corresponding to property $(a)$ for $mcs(t,A_s,P)$. As $A\cap \beta(t) = A_s$, this implies $A \cap \sigma(t) = A_s \cap \sigma(t)$, and due to $A_s$ being an $S$-compatible set, we can conclude $A \cap \sigma(t) \in \{S,\bar{S}\}$. Rest of the statements of property $(1)$ are directly satisfied by partition $(A,B)$ if property $(a)$ is satisfied by $(A,B)$. Further as $A_s$ is non-empty proper subset of $\beta(t)$, both $A$ and $B$ are non-empty. \\
\textbf{Case 1b:} $M[t,S,P,1] = \text{MIN}_{c}$.
This is only possible when $t$ is not a leaf (otherwise $\text{MIN}_c$ would be $\infty$). In this case there exist a $c\in Z(t)$ and $P_{c}$ such that $\forall v\in \sigma(t), m_{P_{c}}(v)\leq m_P(v)$ and $M[t,S,P,1]= M[c,\emptyset,P_{c},1]\leq k$ . Let $(A_{c},B_{c})$ be the partition of $G_{c}$ corresponding to property $(1)$ of $M[c,\emptyset,P_{c^*},1]$, then either $A_{c}\cap \sigma(c) = \emptyset$ or $A_{c}\cap \sigma(c) = \sigma(c)$. In the first case we can verify that $(A_{c},B_{c}\cup (V(G_t)\setminus V(G_{c}))$ is a partition of $G_t$ which satisfies all the points of property $(1)$ for $M[t,S,P,1]$ as $(A_{c},B_{c})$ satisfies property $(1)$ for $M[c,\emptyset,P_{c},1]$ and every edge in $E_{G_t}(A_{c},B_{c}\cup (V(G_t)\setminus V(G_{c}))$ belong to $E_{G_{c}}(A_{c},B_{c})$. In the second case 
$(A_{c}\cup (V(G_t)\setminus V(G_{c}),B_{c})$ is a partition of $G_t$ which satisfies all the points of property $(1)$ for $M[t,S,P,1]$ as $(A_{c},B_{c})$ satisfies property $(1)$ for $M[c,\emptyset,P_{c},1]$, and every edge in $E_{G_t}(A_{c}\cup (V(G_t)\setminus V(G_{c}),B_{c})$ in the second case belong to $E_{G_{c}}(A_{c},B_{c})$. Further, in the first case the partition $(A_{c},B_{c}\cup (V(G_t)\setminus V(G_{c}))$, and in the second case the partition $(A_{c}\cup (V(G_t)\setminus V(G_{c}),B_{c})$  are non trivial as $(A_{c},B_{c})$ is non trivial. This concludes property $(1)$.
\paragraph{}

For property $(2)$, let $(A,B)$ be the partition of $G$ which satisfies all the prerequisites of $(2)$ for given $t$, $S$ and $P$.\\
\textbf{Case 2a:} both $A\cap \beta(t)$ and $B\cap \beta(t)$ are non-empty. We now prove that $E_{G_t}(A,B) \geq M[t,S,P,1]$. By $(k,k)$-edge unbreakability of $\beta(t)$, we have at least one of $A\cap \beta(t)\leq k$ or $B\cap \beta(t)\leq k$.
Let $A_s = A\cap \beta(t)$ and $B_s = B\cap \beta(t)$. In this case, $A_s$ (otherwise $B_s$) is a non-empty proper subset of $\beta(t)$ and $|A_s|$ (otherwise $|B_s|$) is at most $k$, thus $A_s$ (otherwise $B_s$) is an $S$-compatible set of $\beta(t)$ as $A_s\cap \sigma(t) \in \{S, \bar{S}\}$ (otherwise $B_s\cap \sigma(t) \in \{S, \bar{S}\}$) as $A\cap \sigma(t) \in \{S,\bar{S}\}$ which also implies $B\cap \sigma(t) \in \{\bar{S},S\}$.  we can directly check the satisfaction of the prerequisites of property $(b)$ for $mcs(t,A_s,P)$ by partition $(A,B)$  (otherwise by $(B,A)$). 
Thus, $|E_{G_t}(A,B)| \geq mcs(t,A_s,P)$ (otherwise $\geq mcs(t,B_s,P)$), and due to minimality $M[t,S,P,1]\leq \text{MIN}_{\beta(t)} \leq mcs(t,A_s,P)$ (otherwise $\leq mcs(t,B_s,P)$) hence  $|E_{G_t}(A,B)| \geq M[t,S,P,1]$. \\
\textbf{Case 2b:} $A\cap \beta(t) = \emptyset$ or $A\cap \beta(t) = \beta(t)$. Note that this can only be possible if $t$ is a non leaf node since $A\cap V(G_t)$ and $B\cap V(G_t)$ are non empty. 
Consider the following.
\begin{align*}
    \forall c\in Z(t), P_c =(V, m_{P_c}) \text{ such that } \forall v \in  \sigma(c),\  m_{P_c}(v) = |N_{G[E_{G_c}(A,B)]}(v)|;\\
   \text{ and } \forall v \in  V\setminus \sigma(c),\  m_{P_c}(v) = 0;
\end{align*}

As $E_G(A,B)$ is a $d$-matching, hence for every $c\in Z(t)$, and for every $ v\in \sigma(c)$, $|N_{G[E_{G_c}(A,B)]}(v)| \leq d$. Recalling that $G_c$ has no edge with both the endpoints in $\sigma(c)$, every edge in $E_{G_c}(A,B)$ has at most one endpoint in $\sigma(c)$ and since $|E_{G_c}(A,B)|\leq k$, we can conclude that $|P_c|\leq k$ and $P_c$ is a $d$-matched candidate set for $\sigma(c)$ .
Further,  $\forall v \in  \sigma(t),\  m_{P_c}(v) \leq m_{P}(v)$, this is because $(A,B)$ satisfies prerequisites of property $(2)$ for $M[t,S,P,1]$, and thus $\forall v\in \sigma(t), |N_{G[E_{G_t}(A,B)]}(v)|\leq m_{P}(v)$, and since $G_c$ is a subgraph of $G_t$ we conclude $\forall v \in  \sigma(t),\  m_{P_c}(v) \leq m_{P}(v)$.
Now we can verify that $(A,B)$ satisfies all the prerequisites of property $(2)$ for $M[c,\emptyset,P_c,n_e]$ for every $c\in Z(t)$. Further there must be at least one $c\in Z(t)$ such that both $A\cap V(G_c)$ and $B\cap V(G_c)$ are non empty. This is guaranteed because $(A,B)$ is a non trivial partition, and $\beta(t)$ completely belong to either $A$ or $B$. Let $c^*$ be that child of $t$ for which both $A\cap V(G_{c^*})$ and $B\cap V(G_{c^*})$ are non empty, then $E_{G_{c^*}}(A,B)\geq M[c^*,\emptyset,P_{c^*},1]$. Further, $M[c^*,\emptyset,P_{c^*},1]$ must be considered for  computation of $\text{MIN}_c$ in equation $(\ref{MINC})$. Thus, $\text{MIN}_c\leq M[c^*,\emptyset,P_{c^*},1]$. Further, as $E_{G_{c^*}}(A,B)\subseteq E_{G_t}(A,B)$. We have that $|E_{G_t}(A,B)|\geq \text{MIN}_c \geq M[t,S,P,1]$.
\qed \end{proof}

\paragraph{}
 computation of $\text{MIN}_c$ is straightforward and requires to iterate over entries of $M$ corresponding to every $c\in Z(t)$. As the number of entries in $M$ for each $c$ are bounded by $2^{O(k\log k)}$ and $|Z(t)|$ can be at most $O(n)$, we can compute $\text{MIN}_c$ in $2^{O(k\log k)}n^{O(1)}$.
\paragraph{}

To calculate $\textsc{MIN}_{\beta(t)}$ a simple brute force approach of guessing all the $S$-compatible sets of $\beta(t)$ will not work, as it will breach the running time budget that we have. However, as it is required to compute $\text{MIN}_{\beta(t)}$ only if $\text{MIN}_{\beta(t)}\leq k$, and thus the search space can be restricted. We now move on to compute $\text{MIN}_{\beta(t)}$, to compute $\text{MIN}_{\beta(t)}$ we follow the framework given in  \cite{DBLP:journals/talg/CyganKLPPSW21} which requires construction of an auxiliary graph and use of color coding technique (Lemma \ref{color}), which is as follows.
\paragraph{}
Let us assume that $\text{MIN}_{\beta(t)} \leq k$, and let  $A^*_s$ be the $S$-compatible set such that $\text{MIN}_{\beta(t)} = mcs(t,A^*_s,P)\leq k$.
In such a scenario due to proposition \ref{broken} we can assume that $|Z(t,A_s)| + |F(t,A_s)| \leq k$. Let $B^* = (\cup_{c\in Z(t,A^*_s)} (\sigma(c)\setminus A^*_s)) \bigcup (\cup_{e\in  F(t,A^*_s)} (V(e)\setminus A^*_s))$. Due to $|Z(t,A^*_s)| + |F(t,A^*_s)| \leq k$ and $|\sigma(c)|\leq k$,  we can observe that $|B^*| \leq k^2$ . By lemma \ref{color}, for the universe $\beta(t)$ and integers $k,k^2+k$, we obtain a family $\calf$ of subsets of $\beta(t)$ such that there exist a set $A_g \in \calf$ such that $A_g \supseteq A^*_s$ and $A_g \cap (B^*\cup (\sigma(t)\setminus A^*_s)) = \emptyset$. We call such set $A_g$ a \textit{good set}. Further, the size of $\calf$ is bounded by $2^{O(k\log k)}\log n$.
\paragraph{}

We construct an auxiliary graph $H$ on vertex set $\beta(t)$ and add an edge $(u,v)\in E[H]$ if and only if one of the following holds,
\begin{enumerate}
    \item $u,v\in \sigma(t)$;
    \item there exist a $c \in Z(t)$ such that  $u,v\in \sigma(c)$;
    \item $(u,v)\in E(G_t[\beta(t)])$.
\end{enumerate}
Observe that $\sigma(t)$ forms a clique in $H$, similarly every $\sigma(c)$ forms a clique in $H$ and $G_t[\beta(t)]$ is a subgraph of $H$. For $X\subseteq \beta(t)$, we call a connected component $C_s$ of $H[X]$ an $S$\textit{-compatible component} if $V(C_s)$ is an $S$-compatible set of $\beta(t)$.

\begin{proposition}\label{gcS}
 If $A_g$ is a good set, then there exist an $S$-compatible component $C_s$ in the subgraph $H[A_g]$ such that $mcs(t,A^*_s,P)= mcs(t,V(C_s),P)$.

\end{proposition}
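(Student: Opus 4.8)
The plan is to show that the minimizer $A^*_s$ is a union of connected components of $H[A_g]$, and then to isolate the single such component that carries its entire cost, so that restricting to it keeps the cost unchanged.

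\textbf{Structural step.} First I would verify that no edge of $H[A_g]$ joins a vertex of $A^*_s$ to a vertex of $A_g\setminus A^*_s$, so that $A^*_s$ is a union of components of $H[A_g]$. Assuming such an edge $(u,v)$ with $u\in A^*_s$ and $v\in A_g\setminus A^*_s$, I would inspect the three ways an edge can arise in $H$. If $u,v\in\sigma(t)$ then $v\in\sigma(t)\setminus A^*_s$, which is disjoint from $A_g$ by the good-set property. If $u,v\in\sigma(c)$ for some $c\in Z(t)$, then $u\in S_c$ and $v\in\bar{S_c}$ are both nonempty, so $c\in Z^{A^*_s}_{ab}$, whence $\bar{S_c}\subseteq B^*$ and $v\in B^*$, contradicting $A_g\cap B^*=\emptyset$. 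The case $(u,v)\in E(G_t[\beta(t)])$ is analogous with $e\in E^{A^*_s}_{ab}$ and $\bar{S_e}\subseteq B^*$. Each case contradicts $v\in A_g$.

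\textbf{Choosing the component.} Since $\sigma(t)$ is a clique in $H$ and $A_g\cap\sigma(t)=A^*_s\cap\sigma(t)$ (again because $A_g$ avoids $\sigma(t)\setminus A^*_s$), the set $A^*_s\cap\sigma(t)\in\{S,\bar S\}$ lies in one component $C_1$; if $A^*_s\cap\sigma(t)=\emptyset$ I would simply take any component of $A^*_s$. Then $V(C_1)\cap\sigma(t)=A^*_s\cap\sigma(t)\in\{S,\bar S\}$, and $V(C_1)$ is a nonempty proper subset of $\beta(t)$ of size at most $k$, so $V(C_1)$ is $S$-compatible; by the structural step it is a component of $H[A_g]$.

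\textbf{Cost accounting and squeeze.} Because each $\sigma(c)$ and each broken edge is a clique of $H$ whose intersection with $A_g$ is exactly $S_c$ (resp.\ $S_e$), the inside-set of every broken child or edge lies entirely within one component of $A^*_s$; this assigns each broken object to a unique component. Let $F_{P|A^*_s}$ be the optimal compatible family realizing $mcs(t,A^*_s,P)$ and let $w_i$ be the sum of its $M[c,S_c,P_c,1]$ and $M_E[e,S_e,P_e]$ terms over the broken objects assigned to $C_i$, so that $\sum_i w_i=cs(t,A^*_s,F_{P|A^*_s})=mcs(t,A^*_s,P)$ with every $w_i\ge 0$. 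Restricting $F_{P|A^*_s}$ to the broken objects assigned to $C_1$ yields a valid $V(C_1)$-restricted $P$-compatible family, since the associated $P_z$ only shrinks and hence still meets every multiplicity bound; moreover passing to $V(C_1)$ creates no new broken children or edges and preserves the sides of the surviving ones. Hence $mcs(t,V(C_1),P)\le w_1\le\sum_i w_i=mcs(t,A^*_s,P)$, while the minimality defining $MIN_{\beta(t)}$ gives $mcs(t,A^*_s,P)\le mcs(t,V(C_1),P)$ because $V(C_1)$ is $S$-compatible. Together these force $mcs(t,V(C_1),P)=mcs(t,A^*_s,P)$, so $C_s=C_1$ is the desired component.

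\textbf{Anticipated obstacle.} I expect the delicate part to be the cost-accounting step: confirming that the restriction of the optimal family to $C_1$ is a legal $P$-compatible family and, crucially, that moving from $A^*_s$ to $V(C_1)$ introduces no new broken child or edge, so that $w_1$ genuinely upper-bounds $mcs(t,V(C_1),P)$. This hinges entirely on the good set guaranteeing that the only survivors of a broken $\sigma(c)$ inside $A_g$ are the vertices of $S_c$ (the rest lie in $B^*$). The bookkeeping for the degenerate case $A^*_s\cap\sigma(t)=\emptyset$, where every component is $S$-compatible, also needs care, but it follows from the same nonnegativity-based squeeze.
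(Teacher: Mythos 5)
Your proposal is correct and follows essentially the same route as the paper's proof: the same three-case structural argument showing $A^*_s$ is a union of components of $H[A_g]$, the same choice of the component containing $A^*_s\cap\sigma(t)$ (or an arbitrary one when that intersection is empty), the same observation that each broken child/edge is assigned to a unique component so the cost decomposes additively, and the same nonnegativity-plus-minimality squeeze. The point you flag as delicate (that restricting to $V(C_1)$ creates no new broken children or edges and yields a legal compatible family) is exactly the step the paper handles via the disjointness of the $Z_i$ and $E_i$ and the clique structure of each $\sigma(c)$ in $H$, so nothing is missing.
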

\begin{proof} We first make a claim similar to \cite[Claim 12]{DBLP:journals/talg/CyganKLPPSW21} which will be used for the proof of the above proposition, the claim states that no component $C$ in $H[A_g]$ both contain a vertex of $A^*_s$ and a vertex outside $A^*_s$ at the same time.  Assume to the contrary that there exist a connected component $C\in H[A_g]$ such that $V(C)\cap A^*_s \not = \emptyset$ and $V(C)\setminus A^*_s \neq \emptyset$. In such a case there must be an edge $e=\{u,v\}$ in $H[A_g]$ such that $u\in A^*_s$ and $v\not \in A^*_s$. Recalling the construction of $H$, edge $\{u,v\}$ may be present due to (i) both $u$ and $v$ belong to $\sigma(t)$, in which case it contradicts that $A_g$ is a good set because it violates the condition that $A_g\cap (\sigma(t)\setminus A^*_s)= \emptyset$. (ii) Both $u$ and $v$ belong to $\sigma(c)$ for a same $c\in Z(t)$, this also contradicts that $A_g$ is a good set as it violates the condition that $A_g\cap B^* = \emptyset$. (iii) $\{u,v\}$ is an edge in $E(G_t[\beta(t)])$, this implies $\{u,v\}\in F(t,A^*_s)$ which contradicts that $A_g$ is a good set as it also violates the condition that $A_g\cap B^* = \emptyset$.
\paragraph{}
  Now for the proof of the above proposition, let $\{C_1,C_2,...,C_l\}$ be the set of all the connected components in $H[A_g]$ such that $V(C_i)\subseteq A^*_s$ for every $1\leq i \leq l$, then $ A^*_s= \bigcup_{1\leq i\leq l} V(C_i)$ as $A_g\supseteq A^*_s$ . Let us define $Z_i = \{c\ |\ c\in Z(t,A^*_s) \land (V(C_i)\cap \sigma(c)\neq \emptyset)\}$ and $F_i = \{e\ |\ e\in F(t,{A^*_s}) \land (V(C_i)\cap V(e) \neq \emptyset)\}$. A crucial fact is that the sets $Z_i$ are pairwise disjoint with each other due to the construction of $H$ where $\sigma(c)$ forms a clique in $H$ for every $c$. Similarly, sets $F_i$ are also pairwise disjoint. This implies that $A^*_s\cap \sigma(c) = V(C_i)\cap \sigma(c)$ for every $c\in Z_i$ and $A^*_s\cap V(e) = V(C_i)\cap V(e)$ for every $e\in F_i$. Let $\calf_{P|A^*_s} = \{P_c| \ c \in Z(t,{A^*_s}) \} \cup \{P_e| \ e \in F(t,A^*_s)\}$ be the family for which $mcs(t,A^*_s,P)= cs(t,A^*_s,F_{P|A^*_s})$, then

\begin{multline*}\label{cs_components}
     cs(t,A^*_s,F_{P|A_s})=\sum_{1\leq i\leq l}\sum_{c\in Z_i} M[c, V(C_i) \cap \sigma(c), {P_c}]\\ + \sum_{1\leq i\leq l} \sum_{e\in F_i} M_E[e, V(C_i) \cap V(e), {P_e}].
\end{multline*}

\paragraph{}

We now define the families $\calf_{P|V(C_i)} = \{P_c| \ P_c \in \calf_{P|A^*_s} \land c\in Z_i\}\cup \{P_e| \ P_e \in {\calf}_{P|A^*_s} \land e\in F_i\}$. We now consider two cases, in the first case we assume that both $S$ and $\bar{S}$ are non empty. Observe that there exist a component $C_s$ such that  $V(C_s)\cap \sigma(t) = (A^*_s\cap \sigma(t))$ which is essentially $S$ or $\bar{S}$. This is guaranteed because $A_g\cap \sigma(t) = A^*_s\cap \sigma(t)$ as $A_g$ is disjoint from $\sigma(t)\setminus A^*_s$ and  $\sigma(t)$ forms a clique in $H$, thus $A_g\cap \sigma(t)$ should be connected in $H[A_g]$. Further, $V(C_s)\subseteq A^*_s$ as we already discussed that for a component $C$ either $V(C)\subseteq A^*_s$ or $V(C)\cap A^*_s = \emptyset$ if $A_g$ is a good set, further $V(C_s)$ is non-empty and proper subset of $\beta(t)$ as both $S$ and $\bar{S}$ are non-empty. Further, as $V(C_s)\subseteq A^*_s$ we can observe that $|V(C_s)|\leq k$. Thus, $C_s$ qualifies to be an $S$-compatible component. Now we have, 
\begin{multline*}
    cs(t,A^*_s,\calf_{P|A_s})=cs(t,V(C_s),{\calf}_{P|V(C_s)}) +\\ \sum_{1\leq i\leq l \land i\neq s}\left(\sum_{c\in Z_i} M[c, V(C_i) \cap \sigma(c), {P_c},1] + \sum_{e\in E_i} M_E[e, V(C_i) \cap V(e), {P_e}]\right).
\end{multline*}
\normalsize
Thus, we get $cs(t,V(C_s),\calf_{P|V(C_s)})\leq cs(t,A^*_s,\calf_{P|A^*_s})$. Due to minimality,\\ $mcs(t,V(C_s),P)$\ $\leq cs(t,V(C_s),\calf_{P|V(C_s)})$. And due to the minimality of\\ $ mcs(t,A^*_s,P)$ among all $S$-compatible sets, we conclude that $mcs(t,V(C_s),P)$ = $mcs(t,A^*_s,P)$.

\paragraph{}
In the second case, if either $S$ or $\bar{S}$ is empty, then for every $C_i$ such that $V(C_i)\subseteq A^*_s$, we have that $V(C_i)\cap \sigma(t) \in \{\sigma(t),\emptyset\}$, which is essentially $S$ or $\bar{S}$. Further, $V(C_i)\neq \emptyset$, $V(C_i)\subseteq A^*_s$ and $A^*_s\subsetneq \beta(t)$, $V(C_i)$ is a non-empty proper subset of $\beta(t)$. Thus, every $C_i$ is an $S$-compatible set. And we have

\begin{align*}
    cs(t,A^*_s,\calf_{P|A^*_s})=\sum_{1\leq i \leq l }  cs(t,V(C_i),\calf_{P|V(C_i)}).
\end{align*}

Consider any component $C_i$, if $V(C_i)= A^*_s$, then we are done, else if\\ $V(C_i)\subsetneq A^*_s$, then we have $cs(t,V(C_i),\calf_{P|V(C_i)})\leq cs(t,A^*_s,\calf_{P|A^*_s})$. This implies\\ $mcs(t,V(C_i),P)\leq mcs(t,A^*_s,P)$. But, due to minimality of $cs(t,A^*_s,P)$, we have $mcs(t,V(C_i),P)= mcs(t,A^*_s,P)$. This finishes the proof.
\qed \end{proof}

\paragraph{}
\normalsize

\normalsize
 Proposition $\ref{gcS}$ allow us to efficiently calculate $MIN_{\beta(t)}$. 
We need to iterate over every $A_g\in \calf$ and for each $S$-compatible component $C_s$ in $H[A_g]$ (if such $C_s$ exist in $H[A_g]$) we need to use Lemma $\ref{mcs_lemma}$ to either calculate $mcs(t,V(C_s),P)$ or decide if $mcs(t,V(C_s),P)>k$. If $mcs(t,V(C_s),P)>k$, then we assume it to be $\infty$. We take the minimum value $mcs(t,V(C_s),P)$ encountered among all the $S$-compatible component $C_s$ in $H[A_g]$ over all the choices $A_g\in \calf$ and assign it to $\text{MIN}_{\beta(t)}$. Correctness of this procedure comes due to the minimality of $mcs(t,A^*_s,P)$ among all the $S$-compatible sets of $\beta(t)$ and due to Proposition $\ref{gcS}$.
If we don't encounter any $S$-compatible component during this process, then we can conclude that the assumption $\text{MIN}_{\beta(t)}\leq k$ doesn't hold and we set $\text{MIN}_{\beta(t)}=\infty$.
\paragraph{}

As the size of $\calf{}$ is bounded by $2^{O(k\log k)}\log n$ and we can obtain it using Lemma \ref{color} in time $2^{O(k\log k)}n\log n$. And for every $A_g\in \calf{}$, $H[A_g]$ can contain at most $n$ $S$-compatible components and we can find all of them in time $n^{O(1)}$ by using standard graph traversal methods. Thus, we need to use Lemma $\ref{mcs_lemma}$ for at most $2^{O(k\log k)}n^{O(1)}$ $S$-compatible components(sets), and each use takes time $2^{O(k\log k)}n^{O(1)}$, thus  computation of $\text{MIN}_{\beta(t)}$ takes time $2^{O(k\log k)}n^{O(1)}$. Recalling that  computation of $\text{MIN}_c$ takes $2^{O(k\log k)}n^{O(1)}$. This conclude that a single entry $M[t,S,P,1]$ can be computed in time $2^{O(k\log k)}n^{O(1)}$. Further we use proposition $\ref{empty}$ and $\ref{infty}$ to set values of $M[t,S,P,0]$. This concludes that a single entry $M[t,S,P,n_e]$ can be calculated in time $2^{O(k\log k)}n^{O(1)}$. 
Recalling Lemma $\ref{table_correctness}$, this suffices to conclude the proof of Theorem $\ref{FPTMC}$.

\section{Conclusion}
We obtained a $2^{O(k\log k)}n^{O(1)}$ time fixed parameter tractable algorithm for {\kdcut}. And we have that the problem of deciding if the input graph admits a matching cut with edge cut of size at most $k$ can not be solved in time $2^{o(k)}n^{O(1)}$  unless ETH fails. It will be an interesting problem to reduce the gap between lower and upper bound of {\matchingcut}.

\paragraph{}
\textbf{Acknowledgements.}
We thank Fahad Panolan for useful discussions, in particular his suggestion of the compact tree decomposition with bounded adhesion and guaranteed unbreakability that we used in this paper.
\bibliographystyle{plainurl}
\bibliography{d-cut}
\end{document}